\def\Be{\begin{equation}}
\def\Ee{\end{equation}}
\def\Bea{\begin{eqnarray}}
\def\Eea{\end{eqnarray}}
\def\ee{\mbox{e}}
\def\ii{{\sqrt{-1}}}
\def\dd{{\rm{d}}}
\def\RR{{\mathbb R}}
\def\CC{{\mathbb C}}
\def\fB{{\mathfrak B}}
\def\fM{{\mathfrak M}}
\def\fZ{{\mathfrak Z}}
\def\cE{{\mathcal E}}
\def\CC{{\mathbb C}}
\def\RR{{\mathbb R}}
\def\bt{{\bf t}}
\def\bn{{\bf n}}
\def\ee{{\rm{e}}}
\def\mm{{\rm{m}}}
\def\cc{{\rm{c}}}
\def\co{{\rm{co}}}
\def\dor{{\rm{do}}}
\def\Book#1{{\it{#1}, }}
\def\Paper#1{{\textit{#1}, }}
\def\Jour#1{{\rm{#1} }}
\def\Yr#1{({\rm{#1}) }}
\def\Vol#1{{\textbf{#1}}}
\def\Pages#1{{\rm{#1}}}
\def\Publaddr#1{{\rm{#1}, }}
\def\Publ#1{{\rm{#1}, }}
\def\By#1{{\rm{#1}, }}
\newtheorem{definition}{Definition}[section]
\newtheorem{theorem}[definition]{Theorem}
\newtheorem{proposition}[definition]{Proposition}
\newtheorem{remark}[definition]{Remark}
\newtheorem{lemma}[definition]{Lemma}
\begin{document}

\thispagestyle{plain}

\title{On $\Lambda$-Elastica }

\author{S. Matsutani, H. Nishiguchi, K. Higashida, \\ 
A. Nakatani,  H. Hamada}

\date{}

\maketitle

\begin{abstract}
In this paper, we investigate a transition from an elastica to a piece-wised
elastica whose connected point defines the hinge angle $\phi_0$;
we refer the piece-wised
elastica  {\it{$\Lambda_{\phi_0}$-elastica}} or {\it{$\Lambda$-elastica}}.
The transition  appears in the bending beam experiment;
we compress elastic beams gradually and then suddenly
due the rupture, the shapes of $\Lambda$-elastica appear.
We construct a mathematical theory to describe the phenomena 
and represent the $\Lambda$-elastica in terms of
the elliptic $\zeta$-function completely.
Using the mathematical theory,
we discuss the experimental results from an energetic viewpoint
and numerically show the 
explicit shape of $\Lambda$-elastica.
It means that 
this paper provides a novel investigation on elastica theory with rupture.
%
\end{abstract}


\bigskip
\section{Introduction}
\bigskip


The elastica problem is the oldest minimal problem with
the Euler-Bernoulli energy functional \cite{Griffiths,Mat13,Tr}.
In the set of the isometric analytic immersions of
$(s_1, s_2)$ into $\CC$ for fixed $s_1, s_2\in \RR$, $(s_1 < s_2)$,
the minimal point of the energy functional corresponds to the shape of
the elastic curve or elastica.
There are so many studies on the real materials and nonlinear phenomena 
related to elastica, e.g., \cite{Bigoni, MH}.

In this paper, we have investigated the elastic beam which is 
allowed to have 
a transition from the set of isometric analytic immersions to 
the set of continuum immersions which are analytic 
 except a certain point. 
We assume that
the transition occurs depending on its critical force
at the point which has the maximal force in the elastic beam.
Then we have an interesting shape which we call $\Lambda$-elastica.

More precisely,
we consider the set of the isometric analytic immersions,
$\fM_{(s_1,s_2)}:=\{ Z: (s_1,s_2) \to \CC : $ an isometric 
 analytic immersion$\}$.
The  Euler-Bernoulli energy functional is given 
by $\displaystyle{\frac{1}{2}\int_{s_1}^{s_2} k^2 d s}$
where 
$k = \displaystyle{\frac{1}{\ii} \partial_s \log \partial_s Z}$
and $\displaystyle{\partial_s = \frac{d}{d s}}$.
The elastica is given as the minimizer of the energy.
Further for a point $s_0\in (s_1, s_2)$
and a real parameter $\phi_0$,
the transition is
from $\fM_{(s_1,s_2)}$  to 
$\fM^{s_0,\phi_0}_{(s_1,s_2)}:= \Bigr\{ Z: (s_1,s_2) \to \CC \ \Bigr| 
\ $ continues, 
$\phi_0 = \displaystyle{\frac{1}{\ii}  
\log \frac{\partial_s Z(s_0+0)}{\partial_s Z(s_0-0)}}$,
$\rho^{(s_1,s_2)}_{(s_1, s_0)} Z \in \fM_{(s_1,s_0)}$ and 
$\rho^{(s_1,s_2)}_{(s_0, s_2)} Z \in \fM_{(s_0,s_1)} \Bigr\}$ 
for the condition.
Here
$\rho^U_V$ is the restriction operator
which restricts the domain of the function from $U$ to
$V$ ($V\subset U$). 
Corresponding to $\fM^{s_0,\phi_0}_{(s_1,s_2)}$, we consider the 
minimal problem of the energy 
$\displaystyle{\frac{1}{2}\int_{s_1}^{s_0} k^2 d s}+$
$\displaystyle{\frac{1}{2}\int_{s_0}^{s_2} k^2 d s}$.
The minimizer is called $\Lambda$-elastica in this paper.
The parameter 
$\phi_0$ is the angle to determines the shape of 
$\Lambda$-elastica, and thus we, precisely, say 
$\Lambda_{\phi_0}$-elastica.

In this paper, we express deformation of elastic beams as a disjoint
 orbit in a function space which contains $\fM_{(s_1,s_2)}$  and
$\fM^{s_0,\phi_0}_{(s_1,s_2)}$ which describe the transition from 
elastica to $\Lambda_{\phi_0}$-elastica
mathematically.

This work was motivated from the kink phenomena \cite{Hagihara}.
The plastic deformation occurs due to the generations of dislocations
\cite{HessBarrett}.
The plastic deformation causes kink phenomena.
In the kink phenomena, there appear various shapes \cite{Bigoni} and
we find some shapes which could be written by  parts of elastica,
or $\Lambda$-elastica as mentioned above.
In this stage, we do not find a reasonable connection between
the shape of elastica and the kink phenomena.
However it is natural to investigate  $\Lambda$-elastica because
in \cite{DMJ}, the same problem for the thin Kapton
membranes was studied using the finite element method and
 there appeared similar shapes in 
the stretching elastic looped ribbons in \cite{MWT}.
Further it is also interesting to consider the transition 
from elastica to $\Lambda$-elastica as we show experimental results
in this paper.

In order to consider the transition from elastica to $\Lambda$-elastica,
we first show the experimental results of beam bending test with 
rupture phenomena in Section \ref{sec:Exp}.
When the compressed force to the elastic beam is greater than a critical force,
the elastic beam is broken at the critical state 
in which the local force is the maximal
value. Due to the energy of rupture, the total energy of this system
decreases.
There appear $\Lambda$-elastica at the bounce-back of the pieces of the
broken elastic beam after they separate. It apparently behaves like
a continuum beam and we find an 
angle $\phi_0$ and the shape of $\Lambda_{\phi_0}$-elastica. 
We show the compression experiments of elastic
beams of different thickness which correspond to
 different effective elastic constants. Section \ref{sec:elastica}
is a review section of the elastica theory following \cite{Mat13}.
The shape of elastica is  described well in terms of Weierstrass
elliptic $\zeta$-function,
though we do not consider the boundary condition 
explicitly there. In order to explain the experimental results of the beam
bending test,
 we explicitly describe
the boundary condition in the elastica problem in Section \ref{sec:Trans}.
After then, we investigate
the transition from elastica to $\Lambda_{\phi_0}$-elastica with hinge 
$\phi_0$. Section \ref{sec:Trans} is our main part in this paper.
There we construct a mathematical theory to describe
the experimental phenomena 
and represent the $\Lambda$-elastica in terms of
the elliptic $\zeta$-function completely.
In Section \ref{sec:Dis}, we discuss the relation between 
theoretical results and experimental results using the mathematical theory.
It means that we provide a novel investigation on elastica theory with rupture.

\section{Experimental Results}\label{sec:Exp}
\subsection{Experimental Results of Elastica and $\Lambda$-elastica} 
\label{sec:Experiment}

In order to express our motivation in this study, 
we show our experimental results. 
As in Figures \ref{fig:apparatus} and \ref{fig:experiment},
we experimented the beam bending test for
the three type samples of plastic panels as elastic beams
$\delta \times L' \times L$,
where $L'$ is its width, 20.0 [mm], $L$ is its length, 300.0 [mm] and 
$\delta$ is the thickness, 2.0[mm], 3.0[mm] and 5.0[mm].
They consists of the same plastic material and 
the difference of the thickness means the difference of the effective
elastic constant $\kappa \delta$ as mentioned in Section \ref{sec:bend}.
We used a compression testing apparatus,
Autograph AG-100kNG made by Shimadzu Corporation,
in which we can fold the endings 
of the panels so that the ending are parallel and the same 
horizontal position.

\begin{figure}[ht]
\begin{center}
\includegraphics[width=0.35\hsize]{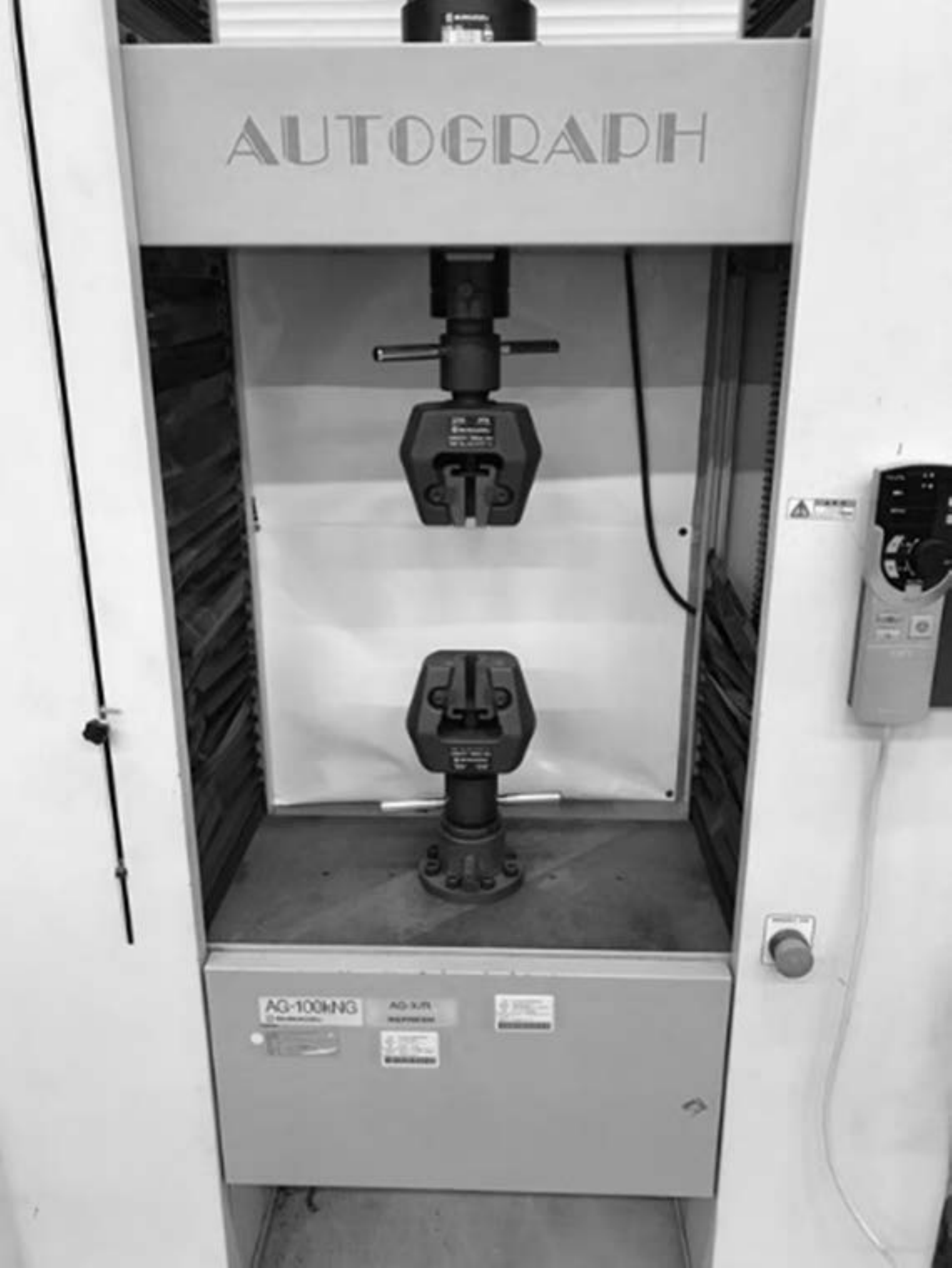}
\end{center}
\caption{Autograph AG-100kNG.}
\label{fig:apparatus}
\end{figure} 

In the experiments,
the crosshead speed was 10[mm/min]. 
The Phantom high-speed camera was used to capture 
the bent panel just before buckling and just after buckling.
The frame rate was 10000[frame/sec]. The length of folded area was 20.0[mm] 
at each ends, therefore the length of bending part was 260.0[mm].
By preserving the parallel and the same levels, we can compress them
to observe the bending structure. 
We gradually compress the panel and then the panel broke suddenly.
We refer the state {\it{critical state}}.
At the state, we denote the height by $X_\cc$, the width by $W_\cc$
and  the curvature by $k_\cc$  as in Figure \ref{fig:ELcurve} (a).
We call $k_\cc$ {\it{critical curvature}},
$X_\cc$ {\it{critical height}} and 
$W_\cc$ {\it{critical width}}..
At the critical states, the shape of the elastic panels are displayed in
Figure \ref{fig:experiment} (a), (b) and (c).
The unit of scale in the background is given as 18.89[mm/unit].
The rupture needs the energy $\Delta E$ and the system lost the energy.
After pieces of the broken elastic panel separate,
the panel satisfies continuous condition at the bounce-back of the pieces
and there appear a hinge which
connects pieces.
In other words, we find the $\Lambda$-elastica as in
Figure \ref{fig:experiment} (d), (e) and (f).
The hinge angle $\phi_0$ and the height $X_\Lambda$
is defined as in Figure \ref{fig:ELcurve} (b).
As we are concerned with the transition from elastica to $\Lambda$-elastica at 
the critical state,
the experimental results can be regarded as the transition.

In order to obtain these shapes, the notch was introduced at the surface of 
testing panel whose depth was 0.5[mm] and width was 1.0[mm] in 
the direction perpendicular to the longitudinal direction.

\begin{figure}[ht]
\begin{minipage}{0.32\hsize}
\begin{center}
\includegraphics[angle=-90, width=0.9\hsize]{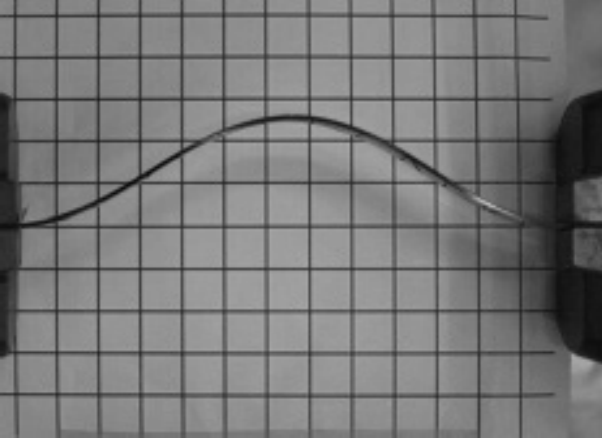}
\\
(a)
\end{center}
\end{minipage}
\begin{minipage}{0.32\hsize}
\begin{center}
\includegraphics[angle=-90, width=0.9\hsize]{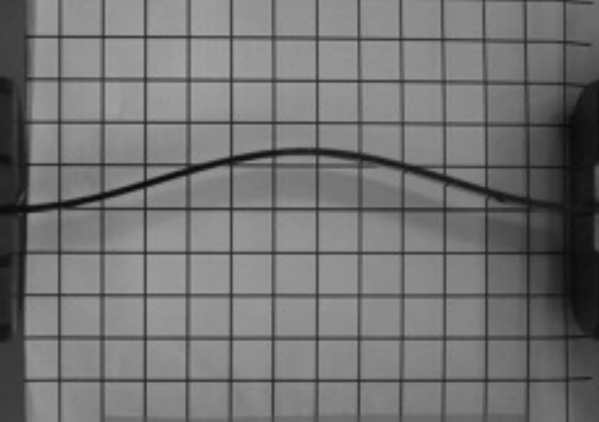}
\\
(b)
\end{center}
\end{minipage}
\begin{minipage}{0.32\hsize}
\begin{center}
\includegraphics[angle=-90, width=0.9\hsize]{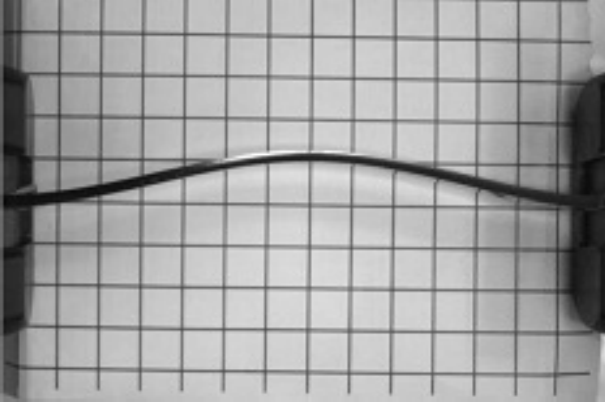}
\\
(c)
\end{center}
\end{minipage}
\\
\begin{minipage}{0.32\hsize}
\begin{center}
\includegraphics[angle=-90, width=0.9\hsize]{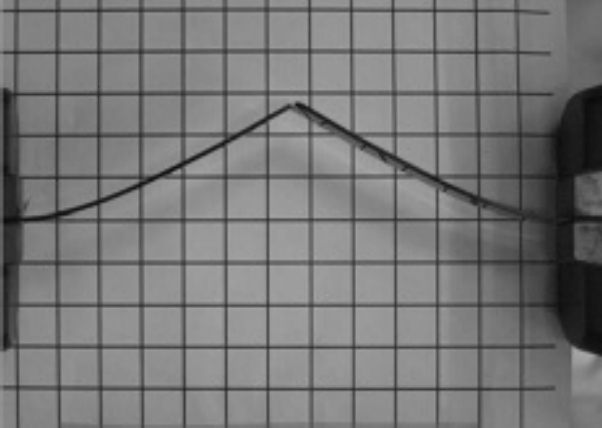}
\\
(d)
\end{center}
\end{minipage}
\begin{minipage}{0.32\hsize}
\begin{center}
\includegraphics[angle=-90, width=0.9\hsize]{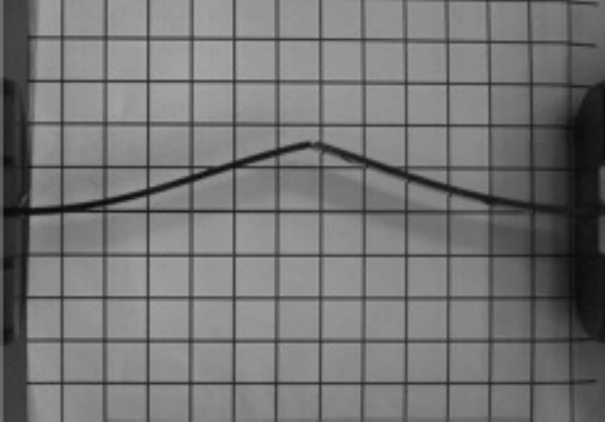}
\\
(e)
\end{center}
\end{minipage}
\begin{minipage}{0.32\hsize}
\begin{center}
\includegraphics[angle=-90, width=0.9\hsize]{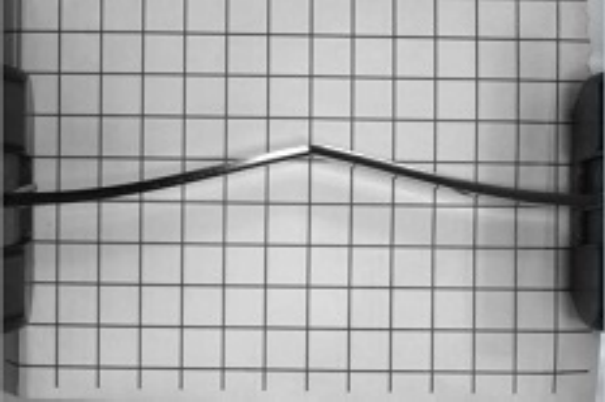}
\\
(f)
\end{center}
\end{minipage}

\caption{Press experiments of elastic beam: 
(a)-(c) are elastic panels with critical curvature
 whereas (d)-(f) are of $\Lambda_{\phi_0}$ 
shapes of elastic panels which appear at the bounce-back of 
the separated pieces of panels and they behave like continuum beams.
The thickness $\delta$ of (a) and (d) are 2.0[mm],
(b) and (e) correspond to 3.0[mm] and (c) and (f) to 5.0[mm].
}
\label{fig:experiment}
\end{figure} 

\begin{figure}[ht]
\begin{center}
\includegraphics[width=0.65\hsize]{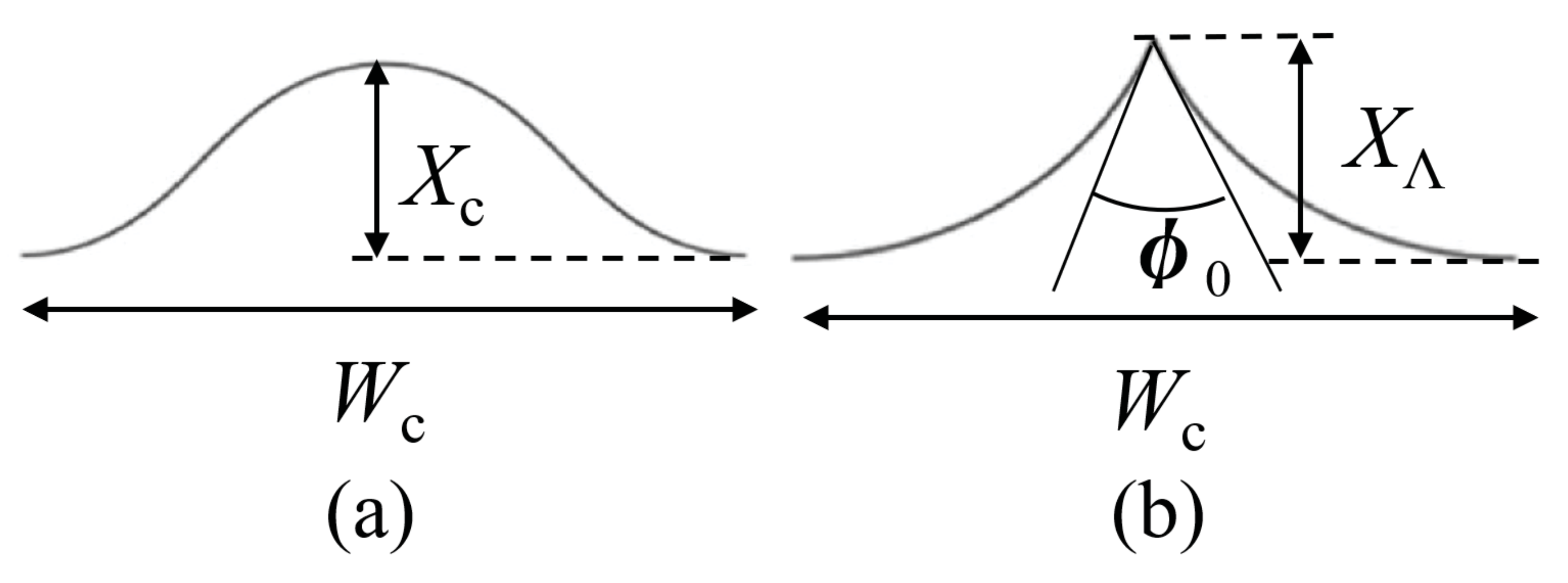}
\end{center}
\caption{Geometrical Characteristics: $X_\cc$, $W_\cc$ and $\phi_0$}
\label{fig:ELcurve}
\end{figure}

Dependence of $X_\cc$, $W_\cc$, $\phi_0$ and $X_\Lambda$
on the thickness $\delta$
is shown in Table 1.
The thicker is, the lager the critical width $W_\cc$ is. 
the smaller, the height $X_\cc$ is and  the larger the angle $\phi_0$ is.
It should be noted that $X_\Lambda$ is nearly equal to $X_\cc$.

\begin{table}[htb]
Table 1: The thickness vs $X_\cc$, $W_\cc$, and $\phi_0$
 in  Figure \ref{fig:experiment} ($W_0=14.1$)
  \begin{tabular}{|c|c|c|c|c|}
\hline
$\delta$ & $X_\cc$ &  $W_\cc$  &$\phi_0$ & $X_\Lambda$ \\
\hline
2.0 [mm]& 49[mm]  & 234[mm]&  0.66$\pi$& 51[mm]\\
\hline
3.0 [mm]& 25[mm]  & 242[mm] & 0.79$\pi$& 28[mm]\\
\hline
5.0 [mm]& 21[mm] & 250[mm] &  0.86$\pi$ & 23[mm]\\
\hline
  \end{tabular}
\end{table}

It is hard to control the transition in this experiment
 but if there is a certain geometrical 
constraint so that it must be continuous
even after it was broken, we may find $\Lambda$-elastica 
statically.
By assuming the situation,
we investigate this experimental result mathematically.

\subsection{Thickness and elastic constant of the elastica}\label{sec:bend}
In order to show the relation between the thickness of elastic beam and
the effective elastic constant,
let us consider an embedding of the elastic beam with constant thickness
$\delta$ in the complex plane $\CC$.
Assume that the center axis of the beam does not change its length.
We estimate the
 stretching of the elastic beam.
Let the curve be
parallel to the center axis curve with
 the vertical distance $q$ from the center axis,
which is parameterized by $s_q$ with the euclidean distance.
The stretching of the curve is given by
$$
d s_q = (1+ k(s)q ) d s,
$$
where  $s$ is the arclength of  the center axis of the beam,
$k(s)$ is the curvature whose inverse is the curvature radius
$\rho(s)=1/k(s)$, and   
$q\in [-\delta/2, \delta/2]$. We assume the case $\delta/\rho =\delta\cdot k \ll 1$.
It means that $e_{q}:= 1+k(s)q=\displaystyle{\frac{\partial s_q}{\partial s}}$ is the ratio of
the stretching length.
The free energy density $\mathcal F d q d s$ caused by bending is given by
$$
\mathcal F d q d s=
\frac{1}{2}\kappa\left(\frac{\partial e_q}{\partial q}\right)^2 (1+k q)
d q d s,
$$
where $\kappa$ is the elastic constant.
By integrating along the vertical direction, we have
\begin{gather}
\begin{split}
\left(\int_{-\delta/2}^{\delta/2} \mathcal F d q \right)
d s
&=\frac{1}{2}\delta
\left(\kappa k^2 +\frac{1}{2}\delta k^3\right)d s\\
&=\frac{1}{2}\delta
\kappa k^2\left(1 +o\left(\frac{\delta}{k}\right)\right)d s.\\
\end{split}
\label{eq:delta_k^2}
\end{gather}
The factor $\kappa \delta$ is regarded as an effective elastic constant,
which  is proportional to the thickness $\delta$.
(\ref{eq:delta_k^2}) is known as the density of 
the Euler-Bernoulli energy functional.

\begin{figure}[ht]
\begin{center}
\includegraphics[width=0.45\hsize]{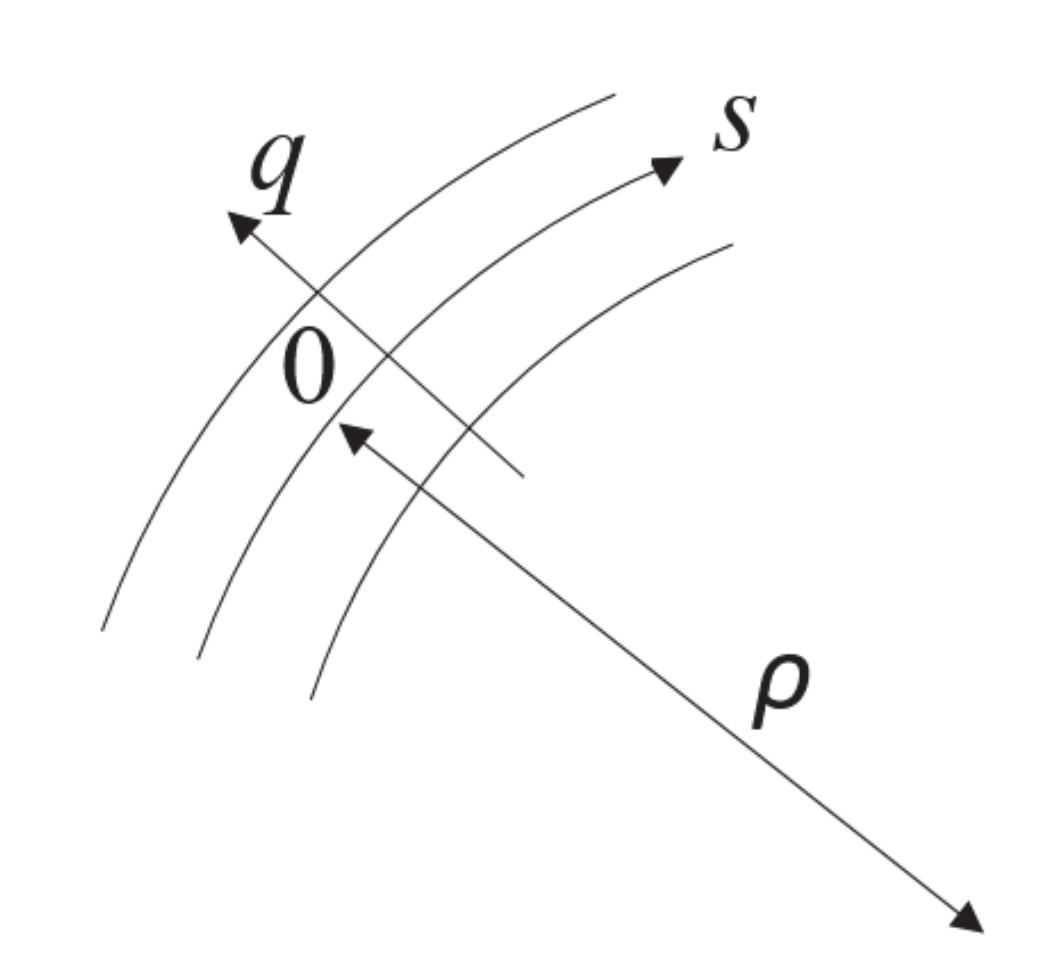}
\end{center}
\caption{Modeling of elastic beam}
\label{fig:model}
\end{figure} 

Thus in the experiment results mentioned in Section \ref{sec:Experiment},
we have considered three cases which have different thickness.

\section{Review of Euler's Elastica} \label{sec:elastica}

This section is devoted to 
 review the elastica theory following \cite{Mat13}.
\bigskip
\subsection{Geometry of a Curve in Plane}
\bigskip

\begin{figure}[ht]
\begin{center}
\includegraphics[width=0.45\hsize]{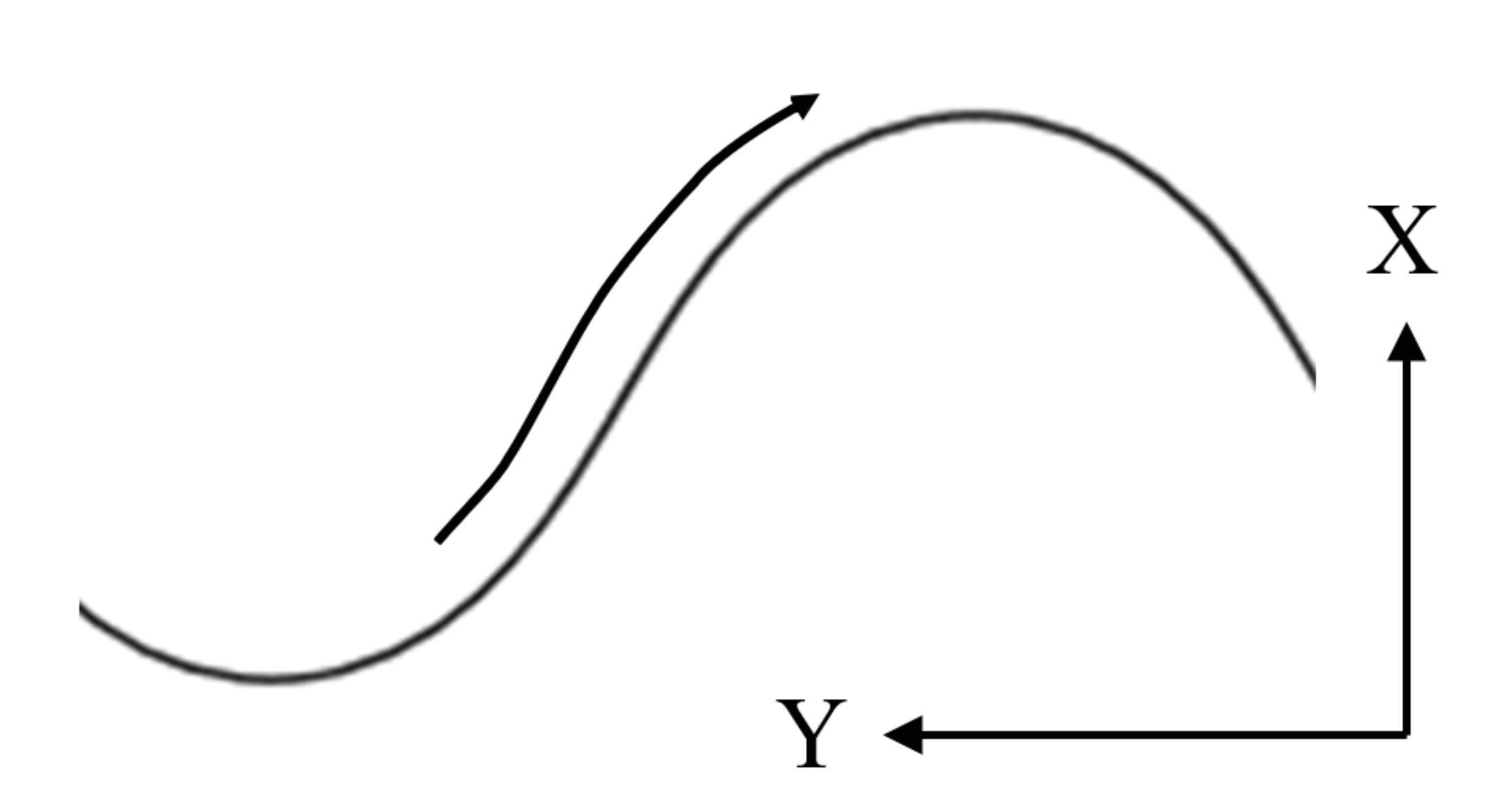}
\end{center}
\caption{Elastic curve}
\label{fig:elastica_Gamma}
\end{figure}

Let $Z: (s_1,s_2) \to \CC$ be an isometric analytic immersion
with the arclength $s$ for $s_1<s_2$.
In other words, we consider an analytic curve in a plane parameterized by the
arc-length $s$; $Z(s)=X(s) + \sqrt{-1}Y(s)$, i.e.,
$|\partial_s Z|=1$,
where $\partial_s:=\dd/\dd s$.
Its tangential vector
is  $\bt:=\partial_s Z=\ee^{\sqrt{-1}\varphi}$
using the tangential angle $\varphi \in
\{\varphi: (s_1, s_2) \to \RR \ |\ $ real analytic $\}$,
whereas the normal vector is
$\bn = \sqrt{-1}\bt$. 
We have the Frenet-Serret relation,
\begin{equation}
\partial_s(\partial_s Z) =\ii k \partial_s Z,
\label{eq:FSrelation}
\end{equation}
where $k:=\partial_s \varphi$ is the curvature
(inverse of curvature radius $\rho(s)$) of the curve.

\bigskip
From (\ref{eq:delta_k^2}),
the Euler-Bernoulli energy functional of $Z$
is given by
\begin{equation}
\cE[Z]=\frac{1}{2}\int_{(s_1,s_2)} k(s)^2\, \dd s.
\label{eq:energy} \end{equation} 
Let us consider its minimal point in the regular function space of $Z$,
$$
\fM_{(s_1,s_2)}:=\{ Z: (s_1,s_2) \to \CC \ |\ 
 Z\mbox{ is an isometric  analytic immersion}\},
$$
which is called celebrated {\it{elastica}} \cite{Griffiths, method, Mat13, Tr}.
In order to obtain the minimal point of the energy functional,
 we consider an infinitesimal deformation
$$
   Z_\varepsilon (s_\varepsilon) = Z(s) + \bn \varepsilon(s),
$$
which does not satisfy the isometric condition because
$$
   \partial_s Z_\varepsilon = (1 - \varepsilon k(s) ) \bt
          + (\partial_s \varepsilon) \varepsilon,
$$
and 
$$
   d s_\varepsilon^2 = d \overline{Z_\varepsilon} d Z_\varepsilon
      = (1 - 2 \varepsilon k) d s^2 + o(\varepsilon^2).
$$
The deformed curvature is given by
$$
k_\varepsilon = k + (k^2+\partial_s^2) \varepsilon
      + o(\varepsilon^2),
$$
since
$$
 \frac{\partial^2}{\partial s_\varepsilon^2}
Z_\varepsilon = (-(\partial_s \varepsilon)k) \bt
+(k +(k^2+\partial_s^2) \varepsilon) \bn
      + o(\varepsilon^2).
$$
The deformed integrated of the Euler-Bernoulli functional is given by
$$
k_\varepsilon^2 d s_\varepsilon
= (k^2 + (k^3+2k\partial_s^2) \varepsilon 
      + o(\varepsilon^2)) d s,
$$
and thus we have the following proposition:
\begin{proposition}
The curvature $k_\mm$ of the minimizer $Z_\mm$ of the Euler-Bernoulli energy 
functional {\rm (\ref{eq:energy})}, i.e.,
$\displaystyle{
Z_\mm|\ \min_{Z\in \fM_{(s_1,s_2)}} \cE[Z] }$,  satisfies 
\begin{equation}
    a k_\mm +\frac{1}{2} k_\mm^3 + \partial_s^2 k_\mm=0,
\label{eq:SMKdV} 
\end{equation}
where $a$ is a constant real number for the Lagrange multiplier.
We call $Z_\mm$ {\rm{elastica}} or {\rm{elastic curve}}.
\end{proposition}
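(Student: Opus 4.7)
The plan is to turn the already-tabulated expansion of $k_\varepsilon^2\,ds_\varepsilon$ into an Euler--Lagrange equation by a standard Lagrange-multiplier variational argument. The normal deformation $Z_\varepsilon(s_\varepsilon)=Z(s)+\bn\,\varepsilon(s)$ breaks the isometric condition, since $ds_\varepsilon^2=(1-2\varepsilon k)\,ds^2+o(\varepsilon^2)$, and consequently the total arc-length changes to first order. To stay in $\fM_{(s_1,s_2)}$ I would enforce the length constraint $\int_{(s_1,s_2)} ds_\varepsilon = s_2-s_1$ by introducing a real scalar $a$ and considering the augmented functional $\cE[Z]-a\int ds$.

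Next I would compute the first variation of the augmented integrand by combining the formulas already derived in the excerpt. Since $ds_\varepsilon=(1-\varepsilon k)\,ds+o(\varepsilon^2)$ and $k_\varepsilon^2\,ds_\varepsilon=\bigl(k^2+(k^3+2k\partial_s^2)\varepsilon\bigr)ds+o(\varepsilon^2)$, the linear term of $\tfrac12 k_\varepsilon^2\,ds_\varepsilon - a\,ds_\varepsilon$ reads
\begin{equation*}
\Bigl(\tfrac12 k^3\,\varepsilon + k\,\partial_s^2\varepsilon + ak\,\varepsilon\Bigr)\,ds .
\end{equation*}
Choosing a compactly supported test function $\varepsilon$ in $(s_1,s_2)$, I would integrate by parts twice to transfer both derivatives off $\varepsilon$:
\begin{equation*}
\int_{(s_1,s_2)} k\,\partial_s^2\varepsilon\,ds = \int_{(s_1,s_2)} (\partial_s^2 k)\,\varepsilon\,ds ,
\end{equation*}
the boundary contributions dropping because $\varepsilon$ is compactly supported. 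Collecting terms, the first variation of the augmented functional becomes $\int\bigl(ak_\mm+\tfrac12 k_\mm^3+\partial_s^2 k_\mm\bigr)\varepsilon\,ds$.

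Finally, the minimizer condition requires this to vanish for every admissible test function $\varepsilon$, so the fundamental lemma of the calculus of variations yields (\ref{eq:SMKdV}) with $a$ identified as the Lagrange multiplier conjugate to the length constraint. The main obstacle I anticipate is bookkeeping: keeping the $O(\varepsilon)$ contributions coming from the deformation of the measure $ds_\varepsilon$ and from the deformation of $k$ separate, ensuring the factor $\tfrac12$ in front of $k^3$ is assembled correctly from both sources, and justifying that restricting to compactly supported variations loses no generality for determining the pointwise Euler--Lagrange equation in the interior.
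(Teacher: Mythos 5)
Your proposal is correct and follows essentially the same route as the paper: the paper's (very terse) proof likewise forms the augmented functional $\cE + a\int_{(s_1,s_2)} ds_\varepsilon$ and sets its functional derivative with respect to the normal variation $\varepsilon(s)$ to zero, using exactly the first-order expansions of $ds_\varepsilon$ and $k_\varepsilon^2\,ds_\varepsilon$ tabulated just before the proposition. You have merely made explicit the integration by parts and the appeal to the fundamental lemma that the paper leaves implicit (the sign and a factor of $2$ in the multiplier $a$ differ, but $a$ is an arbitrary real constant, so this is immaterial).
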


\begin{proof}
The energy functional {\rm (\ref{eq:energy})} is reduced to 
\begin{equation}
    -\frac{\delta {\cE+a \int_{(s_1,s_2)} d s_\varepsilon}}
{\delta \varepsilon(s)}
       = k^3 +2 \partial_s^2k + a k=0
\end{equation}
since we consider the isometric deformation. 
\end{proof}

We note that there are uncountably infinite
 elasticas, $Z_\mm$'s, depending on their ending conditions.
From here we will consider only an element of 
the set $\fZ_\mm$ of elasticas, which is simply denoted by $Z$ again
in this section. The curvature $k_\mm$ is also simply denoted by $k$.

We have the governing equation of elastica:
\begin{proposition}
For a real constant $b$, the elastica obeys the equation
\begin{equation}
    (\partial_s k)^2 + \frac{1}{4}k^4 +
 a k^2 + b=0.
 \label{eq:C_1'}
\end{equation}
\end{proposition}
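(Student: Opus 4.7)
The plan is to derive (\ref{eq:C_1'}) as a first integral of the Euler--Lagrange equation (\ref{eq:SMKdV}). Since (\ref{eq:SMKdV}) is an autonomous second-order ODE in $k(s)$ with no explicit $s$-dependence, the standard trick of multiplying by $\partial_s k$ and integrating should produce exactly one conserved quantity, quartic in $k$ and quadratic in $\partial_s k$.

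Concretely, I would multiply both sides of (\ref{eq:SMKdV}) by $\partial_s k$ and observe that each resulting term is a perfect $s$-derivative:
\begin{equation}
ak\,\partial_s k=\tfrac{1}{2}\partial_s(ak^2),\qquad
\tfrac{1}{2}k^3\,\partial_s k=\tfrac{1}{8}\partial_s(k^4),\qquad
(\partial_s^2 k)\,\partial_s k=\tfrac{1}{2}\partial_s\bigl((\partial_s k)^2\bigr).
\end{equation}
Summing these and integrating once in $s$ yields
\begin{equation}
\tfrac{1}{2}(\partial_s k)^2+\tfrac{1}{8}k^4+\tfrac{1}{2}ak^2=C
\end{equation}
for some integration constant $C\in\RR$. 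Multiplying by $2$ and setting $b:=-2C$ gives (\ref{eq:C_1'}).

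There is no real obstacle here: the step is purely algebraic, and the only thing to be careful about is the sign/normalization convention for $b$, which one simply absorbs into the constant of integration. The constant $b$ (together with $a$ from the previous proposition) will then serve, in the subsequent sections, as the modulus-type parameters that identify the elastica with a level set and lead to its parameterization by Weierstrass elliptic functions, so it is important only that $b$ be an arbitrary real constant, which is exactly what the above derivation provides.
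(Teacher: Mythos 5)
Your proposal is correct and follows exactly the paper's own argument: multiply (\ref{eq:SMKdV}) by $\partial_s k$, recognize each term as a perfect derivative, integrate once, and absorb the integration constant into the real parameter $b$ (the paper likewise notes that reality of $k$ and $s$ forces $b\in\RR$). No differences worth remarking on.
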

\begin{proof} By multiplying (\ref{eq:SMKdV}) by $(\partial_s k)$ and
integrating it, (\ref{eq:SMKdV}) becomes (\ref{eq:C_1'}). Here $b$
is an integral constant. Due to the reality of $k$ and $s$, $b$ must be also
real. \end{proof}

\subsection{Elastica in terms of elliptic functions}
\label{subsec:EllipticFunc}

For later convenience, we introduce affine parameters,
\begin{eqnarray} \label{eq:xyg1}
x(s) &:=&\frac{\ii}{4\alpha} \partial_s k + \frac{1}{8}k^2 +\frac{1}{12}a,
  \nonumber\\
&&\\
y(s) &:=& \frac{1}{2\alpha}\partial_s x =
\frac{1}{2}\left[
 \ii \left(\frac{1}{8} k^3 + \frac{1}{4}a k +\frac{\ii}{4\alpha} 
k \partial_s k
\right)
\right]
. \nonumber
\end{eqnarray}
(\ref{eq:C_1'}) means that we have
 an elliptic curve $C_1$ given by the affine equation,
\begin{gather}
\begin{split}
\frac{\hat{y}^2}{4}=y^2 &=
\left(x + \frac{1}{6}a\right)
\left(x - \frac{1}{12}a-\frac{1}{4}\sqrt{b}\right)
\left(x - \frac{1}{12}a+\frac{1}{4}\sqrt{b}\right)\\
&=(x-e_1)(x-e_2)(x-e_3),\\
\end{split}
\label{eq:C_1}
\end{gather}
where
$e_1 = -\frac{1}{6}a$,
$e_2 =  \frac{1}{12}a+\frac{1}{4}\sqrt{b}$, and
$e_3 =  \frac{1}{12}a-\frac{1}{4}\sqrt{b}$.
For later convenience, we let $a^2-b=16$;
$C_1=\{(x, y) \in \CC^2\ | \ $ (\ref{eq:C_1}) $\} \cup \{\infty\}$.
They mean that $a=2(e_2+e_3-2e_1)$, $b=-(e_2-e_3)^2$.
$(x,\hat y)$ corresponds to the Weierstrass normal form \cite{WW}.

For the curve $C_1$, the incomplete elliptic integral
of the first kind is given by
\begin{equation}
    u = \int^x_\infty \dd u, \qquad
    \dd u =  \frac{\dd x}{2y}.
\label{eq:int_1}
\end{equation}
The complete elliptic integrals of the first kind as
the double periodicity  $(2\omega', 2\omega'')$ are given by
$$
      \omega' := \int_{\infty}^{(e_1, 0)} \dd u, \qquad
      \omega'' := \int_{\infty}^{(e_3, 0)} \dd u,
$$
whereas the
complete elliptic integrals of the second kind are given by
\begin{equation}
      \eta' = \int_{\infty}^{(e_1, 0)} \dd r, \qquad
      \eta'' = \int_{\infty}^{(e_3, 0)} \dd r,
\label{eq:etas}
\end{equation}
where 
$$
    \dd r =  \frac{ x \dd x}{2y}.
$$
Using them, we define
the Weierstrass sigma function  $\sigma$ by
\begin{equation}
\sigma(u) = \frac{2\omega'}{2\pi\ii}
\exp\left(\frac{\eta' u^2}{2\omega'}\right)
\frac{\theta_1(u/\omega')}{\theta_1'(0)},
\label{eq:Wsigma}
\end{equation}
where $\tau=\omega''/\omega'$ and
$$
\theta_1(v) = \ii\sum_{n=-\infty}^{\infty}
        \exp\left(\ii \pi
              \left( \tau(n - 1/2)^2 + (2 n -1) (v + 1)
            \right)
            \right).
$$
In terms of the sigma function, the 
Weierstrass $\zeta$-function and $\wp$ function are given by
\begin{equation}
\zeta(u) = \frac{d}{d u} \log \sigma(u), \qquad
\wp(u) = -\frac{d^2}{d u^2} \log \sigma(u).
\label{eq:Wwp,zeta}
\end{equation}
We have an  identity between $\zeta$-function
and an integral of the second kind,
$$
    \zeta(u) = -\int^{(x,y)}_\infty \dd r
    = -\int^{(x,y)}_\infty x \dd u.
$$
Then it is known that 
$(\wp(u), \partial_u \wp(u)/2)$ is identified with
$(x,y)$ in $C_1$ by setting  $u = \int^{(x,y)}_\infty \dd u$;
we identify both by 
$x(s)= \wp(\alpha s+u_0)$ for a certain $u_0\in \CC$.

\subsection{Euler's Elastica and $\zeta$ function}
\label{subsec:EulerElasII}

Following \cite{Mat13}, we show the shape of elastica
as a minimizer of $\cE[Z]$ of $\fM_{(s_1,s_2)}$.
Here we do not consider the boundary condition explicitly
since $(s_1,s_2)$ has no boundary.

\begin{theorem} \label{thm:Z}
By choosing the origin of angle $\varphi$ and $u_0$,
\begin{equation}
    \partial_s Z(s) \equiv \ee^{\sqrt{-1}\varphi}=
       \ii(\wp(\alpha  s+u_0)-e_1),
\label{eq:pZ}
\end{equation}
\begin{equation}
    Z(s) = \frac{\ii}{\alpha}  (-\zeta(\alpha s + u_0)-e_1 s) + Z_0.
\label{eq:Z=zeta}
\end{equation}
\end{theorem}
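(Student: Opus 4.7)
The plan is to exhibit both $\partial_s Z$ and $\ii(\wp(\alpha s + u_0) - e_1)$ as solutions of the same first-order linear ODE in $s$, and then to integrate. Let $f(s) := \partial_s Z = \ee^{\ii\varphi(s)}$ and $g(s) := \ii(\wp(\alpha s + u_0) - e_1)$. By the Frenet--Serret relation (\ref{eq:FSrelation}), $f$ satisfies $\partial_s f = \ii k f$. I would verify that $g$ satisfies the same equation $\partial_s g = \ii k g$. Differentiating directly yields $\partial_s g = \ii \alpha \wp'(\alpha s + u_0) = 2\ii\alpha y(s)$ via the identification $(\wp, \wp'/2) = (x, y)$, while on the other side $\ii k g = -k(x - e_1)$. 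The required identity $2\ii\alpha y = -k(x-e_1)$ is then a direct algebraic consequence of the explicit formulas (\ref{eq:xyg1}) expressing $x$ and $y$ as differential polynomials in $k$, combined with the elastica equation $\partial_s^2 k + \tfrac{1}{2} k^3 + ak = 0$ from (\ref{eq:SMKdV}) used to eliminate $\partial_s^2 k$. Matching the coefficients of $k^3$, $ak$, and $k\partial_s k$ then holds identically for the value of $\alpha$ implicit in the identification $x(s) = \wp(\alpha s + u_0)$.

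Since $k$ is real, $\ii k$ is purely imaginary, so any solution $h$ of $\partial_s h = \ii k h$ has $|h|$ conserved along $s$. In particular $|g|$ is a constant that depends only on $u_0$ and the moduli of the affine elliptic curve $C_1$; choosing $u_0$ so that $|g| \equiv 1$ (consistent with the normalization $a^2 - b = 16$ fixed on $C_1$) and then adjusting the origin of the tangential angle $\varphi$ to absorb the remaining unit-modulus phase produces the equality $f = g$, which is (\ref{eq:pZ}).

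For (\ref{eq:Z=zeta}), I would integrate (\ref{eq:pZ}) directly. By (\ref{eq:Wwp,zeta}) one has $\wp = -\zeta'$, so the change of variables $u = \alpha s + u_0$ gives $\int \wp(\alpha s + u_0)\,\dd s = -\alpha^{-1}\zeta(\alpha s + u_0)$. Term-by-term antidifferentiation of $\ii(\wp(\alpha s + u_0) - e_1)$ then produces the expression on the right-hand side of (\ref{eq:Z=zeta}), with all constants of integration collected into $Z_0$.

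The main obstacle is the algebraic verification of $2\ii\alpha y = -k(x-e_1)$ in the first step: one must expand $y$ in the monomials $k^3,\ ak,\ k\partial_s k$ via (\ref{eq:xyg1}), substitute the elastica ODE (\ref{eq:SMKdV}) wherever $\partial_s^2 k$ appears, and check that the three independent coefficients match simultaneously. Once this bookkeeping is carried out, the remaining steps are standard uniqueness for first-order linear ODEs and classical antiderivative identities for the Weierstrass $\wp$ and $\zeta$ functions.
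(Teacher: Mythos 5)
Your proposal is correct and follows essentially the same route as the paper: the identity $2\ii\alpha y=-k(x-e_1)$ you verify is exactly the relation $\ii k=\alpha\wp_u(\alpha s+u_0)/(\wp(\alpha s+u_0)-e_1)$ with which the paper's proof opens, and both arguments then integrate once (using the reality of $k$ and the choice of $\varphi_0$ and $u_0$ to fix the unit-modulus constant) and once more via $\zeta'=-\wp$ to obtain (\ref{eq:Z=zeta}). The only cosmetic difference is that you phrase the first step as uniqueness for the linear ODE $\partial_s h=\ii k h$, whereas the paper integrates the logarithmic derivative to write $\varphi$ explicitly as in (\ref{eq:phi}).
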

\begin{proof}
Noting $\displaystyle{
    \ii k = \frac{\alpha \wp_u(\alpha s+u_0)}{\wp(\alpha s+u_0) - e_1}
}$ from  (\ref{eq:xyg1}),
the tangential angle of the elastica is given by
\begin{equation}
    \varphi(s) = \frac{1}{\sqrt{-1}}
                \log \Bigr(\wp(\alpha s+u_0) - e_1\Bigr) + \varphi_0.
 \label{eq:phi}
\end{equation}
It means that the tangential vector of elastica is represented by
an elliptic function and 
we have an explicit formula of $Z$ using the elliptic $\zeta$ function.
In other words, it is found that
$k\equiv\partial_s \varphi$ of (\ref{eq:xyg1})
 satisfies (\ref{eq:SMKdV})
and (\ref{eq:C_1'}) and vice versa.
\end{proof}

\begin{remark}
{\rm{
We have the following relation:
\begin{equation}
    X(u) =X_0 +\frac{\alpha}{4} k(s)
\label{eq:2.14}
\end{equation}
for an appropriate origin $X_0 \in \RR$. 
}}
\end{remark}

In the computation of elastica,  the condition that $\varphi$ and $s$
are real is necessary. We call the condition {\it{reality
condition}} i.e., $|\partial_s Z|=1$ and $s$ is real.

Let us call the tangential period $\hat \omega$ of the (open) elastica
that satisfies
$$
\partial_s Z\left(s + \frac{\hat \omega}{\alpha}\right)
 =\partial_s Z(s).
$$
Further we define an index of (open) elastica by
$$
\mathrm{index}(\partial_s Z) =
\frac{1}{2\pi \ii}
\left(\log\partial_s Z\left(s + \frac{\hat \omega}{\alpha}\right)
 - \log\partial_s Z(s)\right).
$$

Here we give a formula of the Euler-Bernoulli energy function;
\begin{proposition}\label{prop:3.5}
\begin{gather*}
\begin{split}
\frac{1}{2}\int_{s_1}^{s_2} k(s)^2 \dd s
=&\mathfrak{Re}\left(\frac{4}{\alpha}
\left(\zeta(\alpha s_1+u_0)-\zeta(\alpha s_2+u_0)\right)
-\frac{1}{3}a (s_2-s_1)\right),
\end{split}
\end{gather*}
where $\mathfrak{Re}(z)$ means the real part of $z$.
\end{proposition}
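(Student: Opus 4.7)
The plan is to derive this formula by integrating the algebraic relation that expresses $k^2$ in terms of the Weierstrass $\wp$-function, and then using $\zeta' = -\wp$.

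First I would recall from the definition of the affine coordinate in (\ref{eq:xyg1}) that
\begin{equation*}
x(s) = \frac{\ii}{4\alpha}\partial_s k + \frac{1}{8}k^2 + \frac{1}{12}a,
\end{equation*}
and from the identification at the end of Subsection \ref{subsec:EllipticFunc} that $x(s) = \wp(\alpha s + u_0)$. Solving for $k^2$ gives
\begin{equation*}
\frac{1}{2}k(s)^2 = 4\wp(\alpha s + u_0) - \frac{\ii}{\alpha}\partial_s k - \frac{1}{3}a.
\end{equation*}

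Next I would integrate this identity over $(s_1, s_2)$. Using the fundamental relation $\zeta'(u) = -\wp(u)$ from (\ref{eq:Wwp,zeta}), together with the chain rule $\partial_s \zeta(\alpha s + u_0) = -\alpha\, \wp(\alpha s + u_0)$, the first term on the right produces
\begin{equation*}
\int_{s_1}^{s_2} 4\wp(\alpha s + u_0)\,\dd s = \frac{4}{\alpha}\bigl(\zeta(\alpha s_1 + u_0) - \zeta(\alpha s_2 + u_0)\bigr),
\end{equation*}
the middle term integrates to the boundary quantity $-\frac{\ii}{\alpha}(k(s_2) - k(s_1))$, and the constant term contributes $-\frac{1}{3}a(s_2 - s_1)$. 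Collecting,
\begin{equation*}
\frac{1}{2}\int_{s_1}^{s_2}k(s)^2\dd s = \frac{4}{\alpha}\bigl(\zeta(\alpha s_1+u_0)-\zeta(\alpha s_2+u_0)\bigr) - \frac{\ii}{\alpha}\bigl(k(s_2)-k(s_1)\bigr) - \frac{1}{3}a(s_2-s_1).
\end{equation*}

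Finally I would take the real part of both sides. The left side is manifestly real by the reality condition on $k$ and $s$, and under the same reality condition $\alpha$ is real (consistent with the relation $X = X_0 + \frac{\alpha}{4}k$ in (\ref{eq:2.14})), so the boundary term $-\frac{\ii}{\alpha}(k(s_2)-k(s_1))$ is purely imaginary and drops out. This yields exactly the stated identity. The main subtlety, rather than any hard computation, is simply to track that the $\partial_s k$ boundary term is purely imaginary so that taking $\mathfrak{Re}$ removes it cleanly; everything else is a direct consequence of $\zeta' = -\wp$ and the definition of $x(s)$.
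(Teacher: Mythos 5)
Your proposal is correct and follows essentially the same route as the paper: solve the affine relation (\ref{eq:xyg1}) for $k^2$ using $x(s)=\wp(\alpha s+u_0)$, integrate via $\zeta'=-\wp$, and discard the purely imaginary boundary term $-\frac{\ii}{\alpha}(k(s_2)-k(s_1))$ by taking the real part. Your explicit remark that $\alpha$ must be real for that term to be purely imaginary is a small but welcome clarification of the paper's terse ``since $k$ is real'' closing line.
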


\begin{proof}
\begin{gather*}
\begin{split}
\frac{1}{2}\int_{s_1}^{s_2} k^2 \dd s
&=
4\int_{s_1}^{s_2}
\frac{1}{8} \wp(\alpha s+u_0) \dd s
-4\int_{s_1}^{s_2}\frac{\sqrt{-1}}{4\alpha}(\partial_s k)
-\frac{1}{3}a (s_2-s_1)\\
=&\frac{4}{\alpha}
\left(\zeta(\alpha s_1+u_0)-\zeta(\alpha s_2+u_0)\right)\\
&-\frac{\sqrt{-1}}{\alpha}( k(s_2)-k(s_1))
-\frac{1}{3}a (s_2-s_1).\\
\end{split}
\end{gather*}
Since $k$ is real, we have the expression.
\end{proof}

The number $\tau:= \omega''/\omega'$ is a complex number called modulus,
which determines the elliptic curve uniquely
modulo trivial transformation, translation, dilatation
and so on, and also determine the shape of elastica.

Due to the reality condition of the elastica, the moduli $\Xi$
of elastica is given by \cite{Mum}
\begin{equation}
\Xi := {\sqrt{-1}} \mathbb{R}_{>0} \cup \left(\frac{1}{2} + {\sqrt{-1}} \mathbb{R}_{>0}\right)
\cup \{\infty\}
\quad\mathrm{modulo}\quad \mathrm{PSL}(2,\mathbb{Z})
\label{eq:M_elas}
\end{equation}
as a subspace of the moduli of elliptic curves, $\Xi \subset
\mathbb{H}/ \mathrm{PSL}(2,\mathbb{Z})$, where $\mathbb{H}$ is the
upper half plane, i.e., $\mathbb{H}:=\{z \in \mathbb{C} \,
;  \Im z >0\}$.

\begin{figure}
\begin{center}
\includegraphics[width=0.45\hsize]{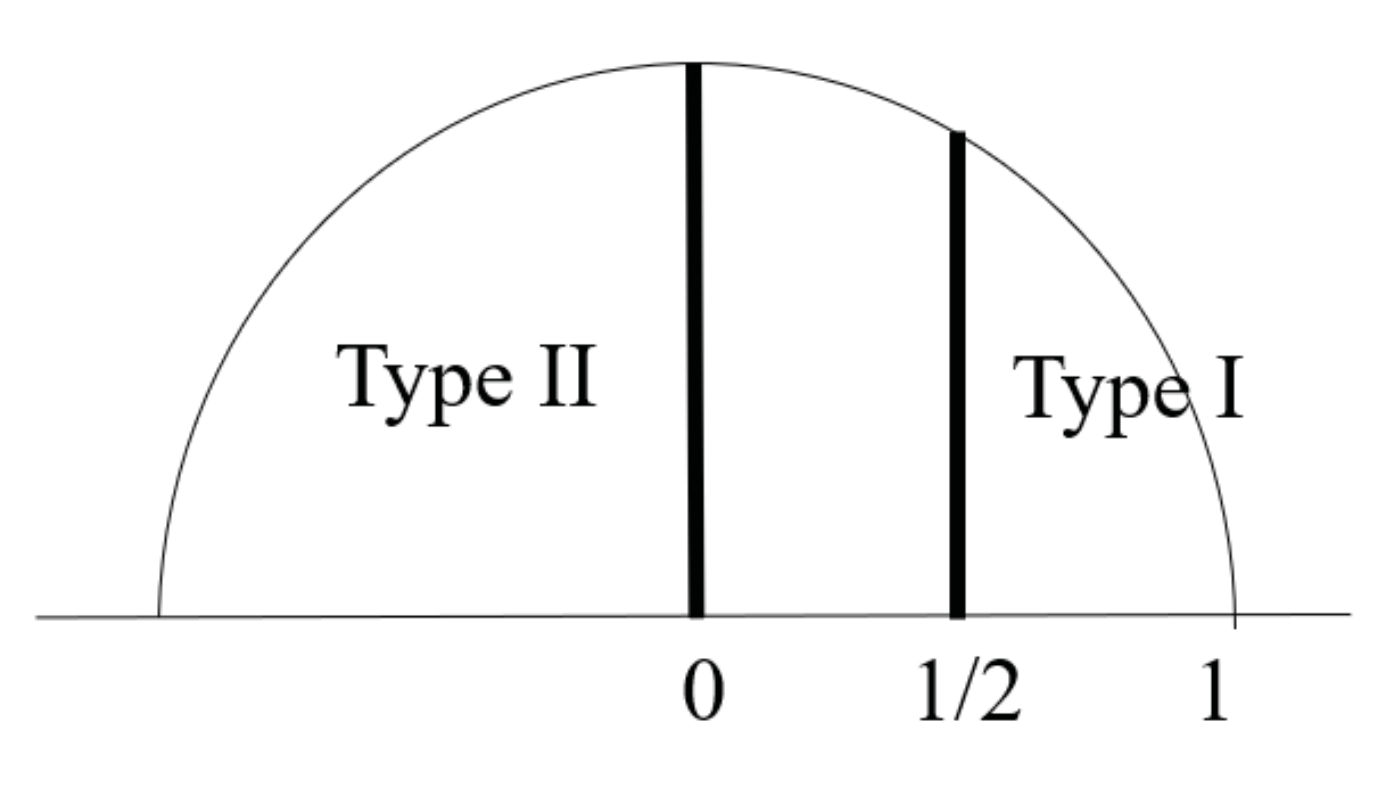}
\caption{Moduli $\Xi$}
\label{fig:typeI}
\end{center}
\end{figure}

This picture leads the classification of
elastica as follows, without proof \cite{Mat13,Mum};

\begin{figure}
\begin{minipage}{0.49\hsize}
\begin{center}
\includegraphics[angle=90, width=0.80\hsize]{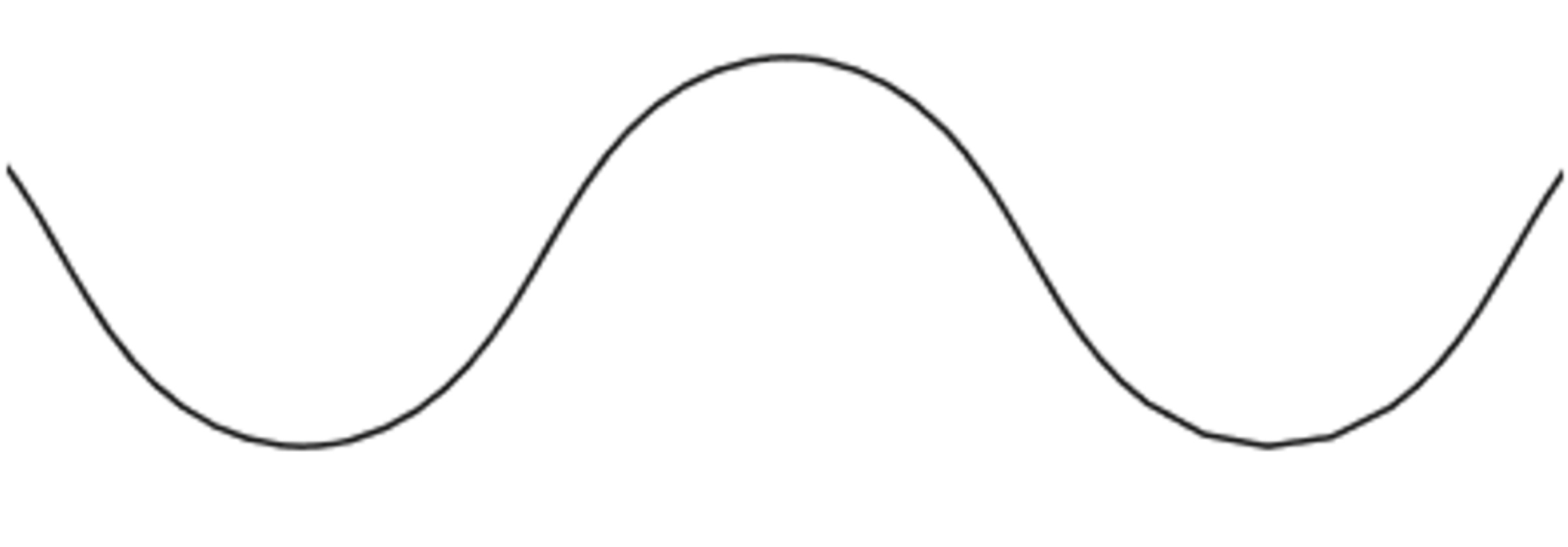}
\\
(a)
\end{center}
\end{minipage}
\begin{minipage}{0.49\hsize}
\begin{center}
\includegraphics[angle=90,width=0.80\hsize]{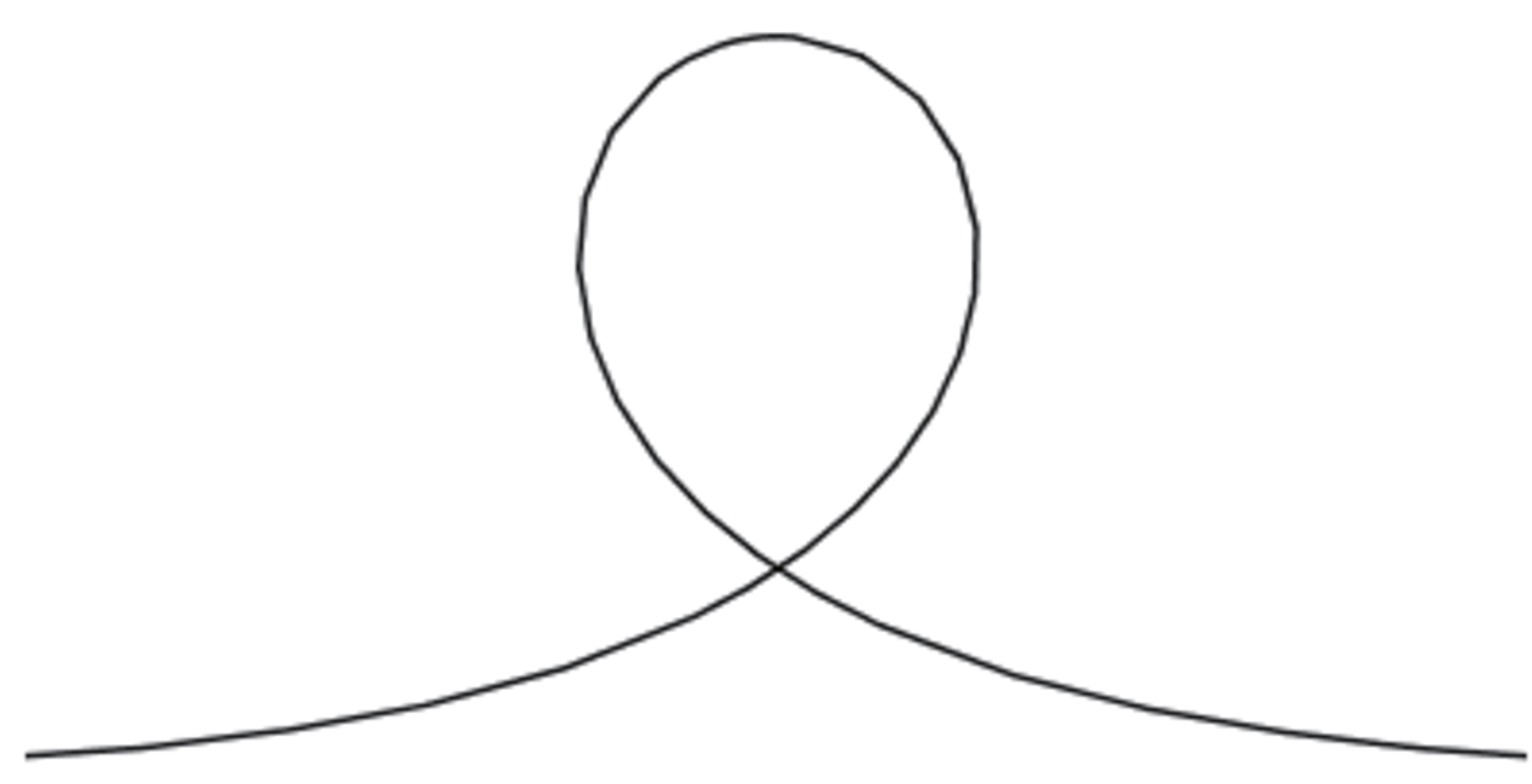}
\\
(d)
\end{center}
\end{minipage}\\
\begin{minipage}{0.49\hsize}
\begin{center}
\includegraphics[angle=90, width=0.80\hsize]{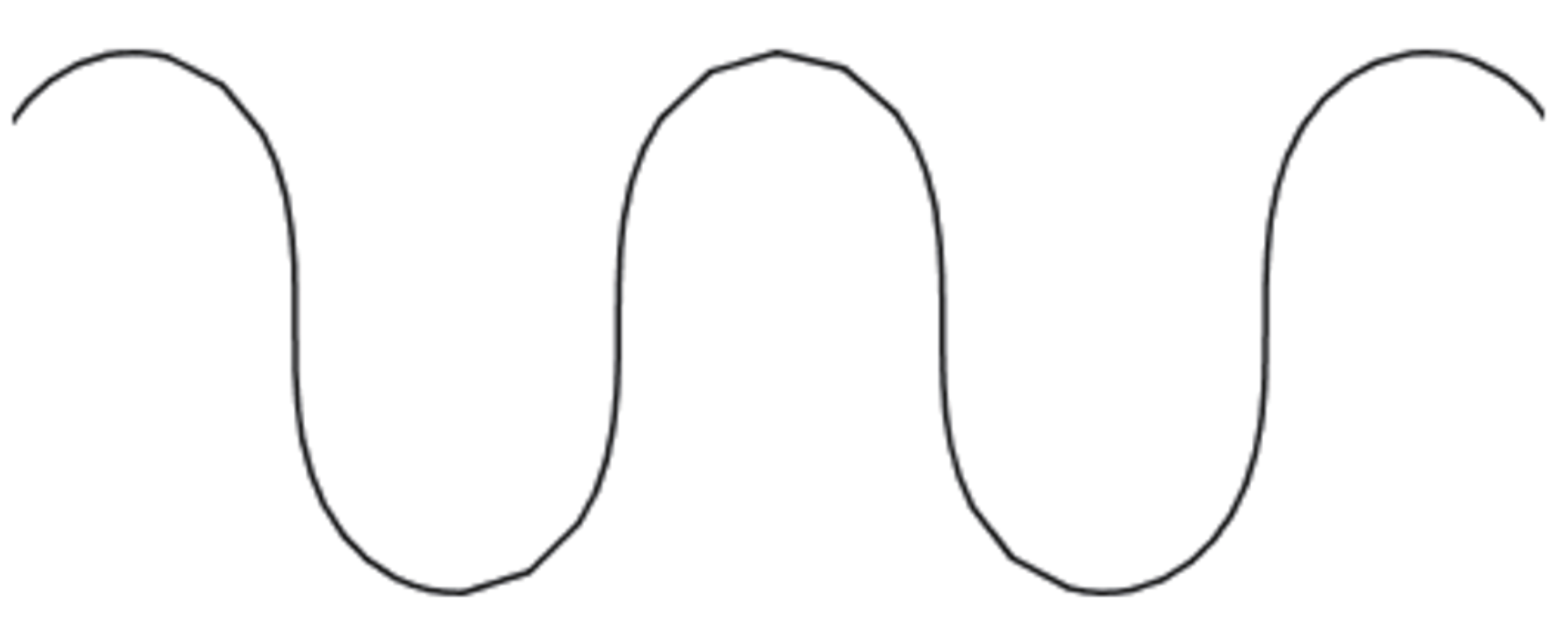}
\\
(b)
\end{center}
\end{minipage}
\begin{minipage}{0.49\hsize}
\begin{center}
\includegraphics[angle=90,width=0.80\hsize]{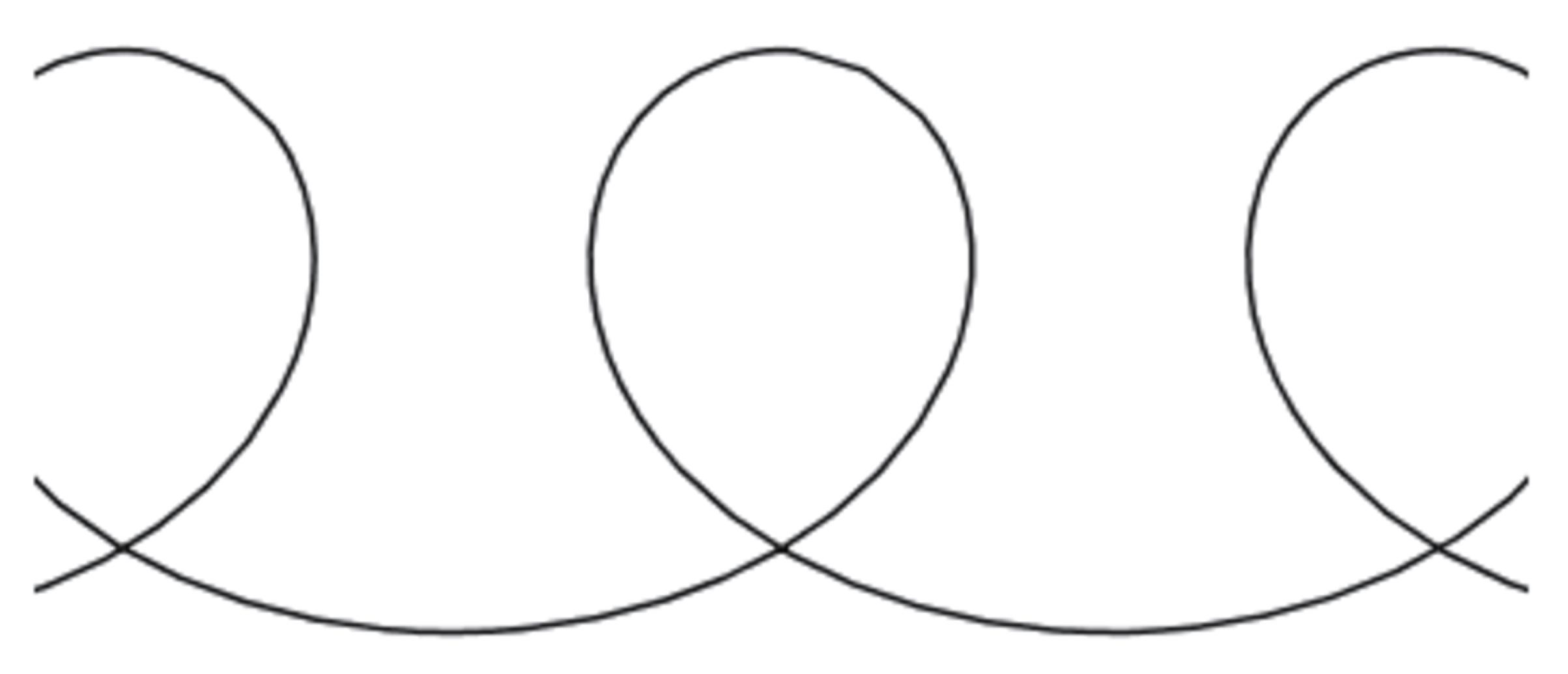}
\\
(e)
\end{center}
\end{minipage}\\
\begin{minipage}{0.499\hsize}
\begin{center}
\includegraphics[angle=90, width=0.99\hsize]{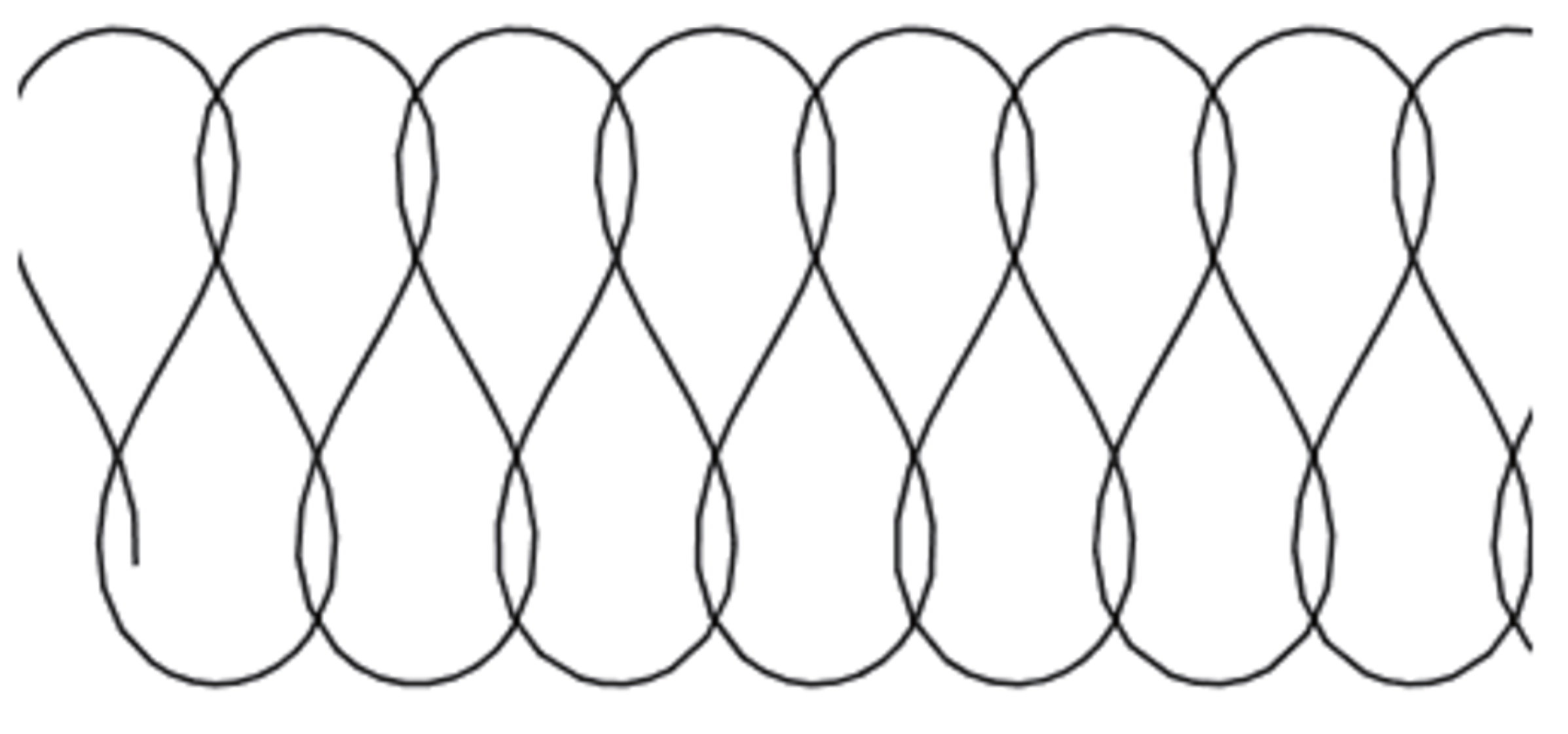}
\\
(c)
\end{center}
\end{minipage}
\begin{minipage}{0.49\hsize}
\begin{center}
\includegraphics[angle=90,width=0.99\hsize]{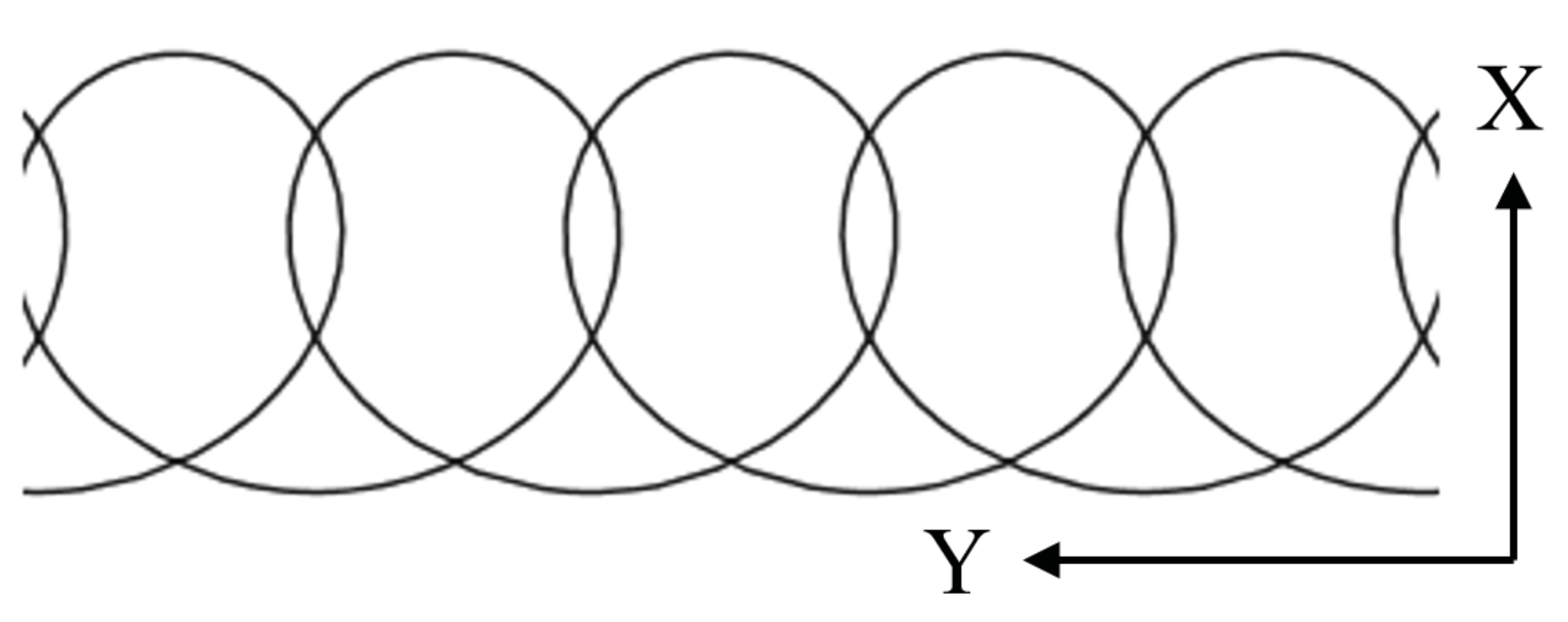}
\\
(f)
\end{center}
\end{minipage}\\
\caption{Types of Elastica:  (a) shows type Ia, (b) is the rectangular
elastica ($a=0$), (c) is type Ib, and (d)-(f) correspond to type II.}
\label{fig:typeI}
\end{figure}

\begin{proposition}{\rm{\cite{Mat13,Mum}}}
\label{prop:mod-elastica}
\begin{enumerate}

\item Type Ia: for the case $-4 \le a \le 0$, 
$\displaystyle{u_0 =
\left(\omega''-\frac{\omega'}{2}\right)}$,
$\hat \omega = 2 \omega'  \in \RR$,
$\displaystyle{\tau
\in\left( {\sqrt{-1}} \mathbb{R}_{>0} + \frac{1}{2}\right)}$
and $\mathrm{index}(\partial_s Z)$ is zero.

We call $a=0$ case, the rectangular elastica,
which corresponds to 
$\tau = \frac{1}{2}+\frac{1}{2}\ii$ and $1-\tau^{-1}=\ii$.

\item Type Ib: for the case $0 < a \le 4$, 
$u_0 = \displaystyle{ -\frac{\omega'}{2}}$,
$\hat \omega = 2 \omega' -4 \omega'' \in \RR$,
$\displaystyle{\tau
\in\left( {\sqrt{-1}} \mathbb{R}_{>0} + \frac{1}{2}\right)}$
and $\mathrm{index}(\partial_s Z) = 0$.

\item Type II: for $4 < a$, 
$u_0 = \displaystyle{\frac{\omega'}{2}}$,
$\hat \omega = 2 \omega'' \in \RR$,
 $\tau \in {\sqrt{-1}} \mathbb{R}_{>0}$
and $\mathrm{index}(\partial_s Z)$ is equal to
$\pm 1$.
\end{enumerate}
Here $\mathbb{R}_{>0}$ is $\{ x \in \mathbb{R} \, ;  x
>0\}$.
\end{proposition}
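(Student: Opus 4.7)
The plan is to classify elasticas by pairing the real structure of the elliptic curve $C_1$ with the reality constraints on the tangent. From (\ref{eq:xyg1}), the requirement that $k,\partial_s k,s$ all be real forces
\[
\Re\wp(\alpha s+u_0)=\tfrac18 k^2+\tfrac{a}{12}\ge\tfrac{a}{12},\qquad
\Im\wp(\alpha s+u_0)=\tfrac{1}{4\alpha}\partial_s k\in\mathbb R,
\]
so the trajectory $u(s)=\alpha s+u_0$ must be confined to a real half-period line of the $\wp$-lattice whose image sits in the half-plane $\{\Re\xi\ge a/12\}$. The three cases of the proposition arise from the three essentially different such lines.

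First I would read off the real structure of $C_1$ from the normalization $a^2-b=16$. The discriminant of the cubic $y^2=\prod(x-e_i)$ changes sign at $|a|=4$: for $|a|>4$ all three roots $e_1>e_2>e_3$ are real, while for $|a|<4$ only $e_1=-a/6$ is real and $e_2,e_3$ form a complex-conjugate pair. The standard dictionary between real elliptic curves and the period ratio (see \cite{Mum}) places $\tau\in i\mathbb R_{>0}$ in the first case and $\tau\in\tfrac12+i\mathbb R_{>0}$ in the second, which already separates Type II from Types Ia and Ib. The rectangular case $a=0$ falls out as the self-dual point where $g_3=-4e_1e_2e_3$ vanishes, pinning $\tau=\tfrac12+\tfrac12 i$.

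Next I would pin down $u_0$ and $\hat\omega$ in each case by matching the required half-plane against the real ovals of $C_1(\mathbb R)$. For Type II the bounded oval $\{e_2\le x\le e_1\}$ parametrises the trajectory, forcing $u_0=\omega'/2$ and $\hat\omega=2\omega''$. For Type Ia ($-4\le a\le 0$) the inequality $a/12-e_1=a/4\le 0$ selects the line through $\omega''-\omega'/2$ and produces the real period $\hat\omega=2\omega'$, whereas for Type Ib ($0<a\le 4$) the reverse sign of $a/4$ forces the shift to $u_0=-\omega'/2$ and the modified period $\hat\omega=2\omega'-4\omega''$. In each subcase the reality of $\hat\omega$ (needed for $s$ to range over $\mathbb R$) follows from $\tau\in\tfrac12+i\mathbb R$.

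Finally, the index is the winding number of $\partial_sZ=i(\wp(u)-e_1)$ about the origin as $u$ traverses one tangential period. For Types Ia and Ib one checks that the orbit of $\wp-e_1$ stays on a single side of the origin, using the sign of $a/4$ to locate the image of the half-period line relative to $e_1$; this gives $\mathrm{index}=0$. For Type II the orbit encircles $0$ exactly once along the bounded oval between $e_2$ and $e_1$, producing $\mathrm{index}=\pm 1$ according to the orientation fixed by $\alpha$. The principal obstacle is precisely this winding computation for Type II: one must verify that the argument of $\wp(u)-e_1$ changes by exactly $\pm 2\pi$ along $\omega'/2+[0,2\omega'']$, carefully accounting for the pole of $\wp$ and the orientation of the cycle. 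This argument is the one carried out in \cite{Mat13,Mum}, to which the present classification reduces after the affine identification of curvature with $\wp$.
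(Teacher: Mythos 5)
First, a point of reference: the paper offers no proof of this proposition at all --- it is introduced with the words ``without proof'' and a citation to \cite{Mat13,Mum} --- so your sketch cannot be checked against an argument in the text and has to stand on its own. As written it does not. The decisive reality condition for an elastica is the isometry $|\partial_s Z|=1$, i.e.\ $|\wp(\alpha s+u_0)-e_1|=1$ for all real $s$ by (\ref{eq:pZ}); it is this unit--modulus condition, not the necessary-but-much-weaker pair of conditions you extract from (\ref{eq:xyg1}), that singles out the admissible lines $u_0+\alpha\mathbb{R}$ and hence the three values of $u_0$. Your framing is also internally inconsistent for Type II: writing $\Re\wp=\tfrac18k^2+\tfrac{a}{12}$ and $\Im\wp=\partial_sk/4\alpha$ presumes $\alpha$ real, but then along $u_0=\omega'/2$ the curvature is non-constant, $\Im\wp\not\equiv0$, and the trajectory does \emph{not} lie on $C_1(\mathbb{R})$ --- $\omega'/2+\mathbb{R}$ is not one of the lines on which $\wp$ is real --- so it cannot be ``parametrised by the bounded oval,'' and the winding-number computation for $\mathrm{index}=\pm1$ cannot be reduced to travel along a real oval. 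Compounding this, the root ordering you assert is wrong for the paper's normalization: with $e_1=-a/6$, $e_{2,3}=\tfrac{a}{12}\pm\tfrac14\sqrt{b}$ and $a>4$ one checks $e_2>e_3>e_1$, so the set $\{e_2\le x\le e_1\}$ is empty and the bounded component of the real locus lies over $[e_1,e_3]$.

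Second, the claims that carry the actual content of the proposition --- the specific values $u_0=\omega''-\omega'/2$, $-\omega'/2$, $\omega'/2$, and above all the tangential period $\hat\omega=2\omega'-4\omega''$ in Type Ib --- are asserted rather than derived; ``the reverse sign of $a/4$ forces the shift'' is not an argument, and nothing in the sketch explains where the term $-4\omega''$ comes from. You also do not address why the classification begins at $a=-4$ (the degenerate case $b=0$, the straight line $Z_\mm^L$) and why $a<-4$ is excluded even though the cubic again has three real roots there. The sound parts of your sketch --- the discriminant analysis showing that $|a|\lessgtr 4$ separates one real branch point from three, the resulting dichotomy $\tau\in{\sqrt{-1}}\mathbb{R}_{>0}$ versus $\tau\in\tfrac12+{\sqrt{-1}}\mathbb{R}_{>0}$ consistent with (\ref{eq:M_elas}), and $g_3=0$ at $a=0$ giving $\tau\sim{\sqrt{-1}}$ --- do recover the moduli statement, but for everything else the proposal ultimately defers to \cite{Mat13,Mum}, exactly as the paper does, without supplying the missing computations.
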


\bigskip

\section{Transition from elastica to $\Lambda_{\phi_0}$-elastica with hinge $\phi_0$.}\label{sec:Trans}

In this section,
we express the transition phenomenon from elastica to $\Lambda$-elastica.
In order to express it,
\begin{enumerate}

\item we explicitly express the boundary condition in the theory of elastica 
in Section \ref{sec:elastica} (We introduce the function space 
$\fM_{[s_1,s_2]}$ rather than $\fM_{(s_1,s_2)}$ and the 
boundary condition $\fB^{BT}_{W}$ with a parameter $W>0$.),

\item we introduce the novel function space $\fM^{s_0,\phi_0}_{[s_1,s_2]}$
in which the minimizer of the Euler-Bernoulli energy is $\Lambda$-elastica
of hinge $\phi_0$,

\item we prepare the function space $\fM^{s_0}_{[s_1,s_2]}$ which
includes the ordinary elasticas, $\fM_{(s_1,s_2)}$, and $\Lambda$-elasticas
$\fM^{s_0,\phi_0}_{[s_1,s_2]}$, and consider a disjoint orbit in 
$\fM^{s_0}_{[s_1,s_2]}$ as the transition, and 

\item using the symmetry, we set $s_1=-\frac{L}{2}$, $s_2=\frac{L}{2}$, $s_0=0$ and
give the explicit results of the transition.
\end{enumerate}

\subsection{Preliminaries}
From here, we discriminate the minimizer $Z_\mm$ and the general immersion $Z$.

Let $\rho^U_V$ be
 the restriction of the domain of the function from $U$ to $V(\subset U)$.
In order to impose the boundary condition, we consider
\begin{gather*}
\begin{split}
\fM_{[s_1,s_2]}:=&\Bigr\{Z: [s_1, s_2] \to \CC \ |
 \ Z \mbox{ is differentiable at } s_a (a=1,2), \\
&\rho^{[s_1,s_2]}_{(s_1,s_2)} Z \in \fM_{(s_1,s_2)}\Bigr\}.
\end{split}
\end{gather*}
For a real parameter $\phi_0$ and $s_0 \in (s_1, s_2)$,
 we introduce the function spaces,
\begin{gather*}
\begin{split}
\fM^{s_0,\phi_0}_{(s_1,s_2)}:=& \Bigr\{ Z: (s_1,s_2) \to \CC \ \Bigr| \ 
\mbox{continues, }\phi_0 = \displaystyle{\frac{1}{\ii}  
\log \frac{\partial_s Z(s_0+0)}{\partial_s Z(s_0-0)}},\\
& \rho^{(s_1,s_2)}_{(s_1,s_0)} Z \in \fM_{(s_1,s_0)},  
\rho^{(s_1,s_2)}_{(s_0,s_2)} Z \in \fM_{(s_0,s_2)} \Bigr\},
\end{split}
\end{gather*}
\begin{gather*}
\begin{split}
\fM^{s_0,\phi_0}_{[s_1,s_2]}:=&\Bigr\{Z: [s_1, s_2] \to \CC \ | 
 \ Z \mbox{ is differentiable at } s_a (a=1,2), \\
&\rho^{[s_1,s_2]}_{(s_1,s_2)} Z \in \fM^{s_0,\phi_0}_{(s_1,s_2)}\Bigr\},
\end{split}
\end{gather*}
and
$$
\fM^{s_0}_{[s_1,s_2]}:=
\bigcup_{\phi_0\in [0, 2\pi)} \fM^{s_0,\phi_0}_{[s_1,s_2]}.
$$

We have a simple relation.
\begin{lemma}
For given $s_0 \in (s_1, s_2)$,
\begin{equation}
\fM^{s_0,\phi_0}_{[s_1,s_2]}\subset \fM^{s_0}_{[s_1,s_2]} , \quad
\fM_{[s_1,s_2]}\subset \fM^{s_0}_{[s_1,s_2]}.
\label{eq:fMinfM}
\end{equation}
\end{lemma}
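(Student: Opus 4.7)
The plan is to unpack definitions in two short steps, since the lemma is essentially a tautology once one tracks what each symbol stands for.

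For the first inclusion $\fM^{s_0,\phi_0}_{[s_1,s_2]} \subset \fM^{s_0}_{[s_1,s_2]}$, I would simply invoke the definition $\fM^{s_0}_{[s_1,s_2]} := \bigcup_{\psi \in [0,2\pi)} \fM^{s_0,\psi}_{[s_1,s_2]}$. Given $\phi_0$, either $\phi_0 \in [0, 2\pi)$ already (in which case the inclusion is tautological), or one must first replace $\phi_0$ by its representative modulo $2\pi$; note that the hinge condition $\phi_0 = \tfrac{1}{\ii}\log(\partial_s Z(s_0+0)/\partial_s Z(s_0-0))$ is naturally defined modulo $2\pi$ because the logarithm on the unit circle is, so the two function spaces agree under this identification. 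This handles the first inclusion in one line.

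For the second inclusion $\fM_{[s_1,s_2]} \subset \fM^{s_0}_{[s_1,s_2]}$, the key observation is that a smooth (analytic, isometric) immersion on the whole interval has no jump in its tangent at $s_0$. Concretely, given $Z \in \fM_{[s_1,s_2]}$, its restriction $\rho^{[s_1,s_2]}_{(s_1,s_2)} Z$ is analytic on $(s_1,s_2)$, hence continuous at $s_0$, and $\partial_s Z(s_0 + 0) = \partial_s Z(s_0 - 0)$. Therefore
\begin{equation*}
\phi_0 = \frac{1}{\ii}\log\frac{\partial_s Z(s_0+0)}{\partial_s Z(s_0-0)} = 0.
\end{equation*}
The restrictions $\rho^{(s_1,s_2)}_{(s_1,s_0)} Z$ and $\rho^{(s_1,s_2)}_{(s_0,s_2)} Z$ are themselves analytic isometric immersions, hence lie in $\fM_{(s_1,s_0)}$ and $\fM_{(s_0,s_2)}$ respectively. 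So $Z \in \fM^{s_0,0}_{[s_1,s_2]} \subset \fM^{s_0}_{[s_1,s_2]}$.

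There is no genuine obstacle here; the only mild point of care is the convention for $\phi_0$ (whether it is identified modulo $2\pi$ and whether the hinge angle $\phi_0 = 0$ is admissible), which should be made explicit once, either in the statement of the lemma or within the proof, so that the union indexing $\phi_0 \in [0,2\pi)$ is compatible with the logarithm branch used in the definition of $\fM^{s_0,\phi_0}_{(s_1,s_2)}$. Once this convention is fixed, both inclusions are immediate from the definitions.
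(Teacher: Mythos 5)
Your proof is correct and is exactly the definition-unpacking argument the paper has in mind (the paper states this lemma without proof, treating it as immediate). Both inclusions follow as you say: the first from the union defining $\fM^{s_0}_{[s_1,s_2]}$ (with $\phi_0$ read modulo $2\pi$), and the second because an analytic isometric immersion on all of $(s_1,s_2)$ has no tangent jump at $s_0$, so it lies in $\fM^{s_0,0}_{[s_1,s_2]}$.
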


\subsection{Elastica with boundary condition}

In this subsection, we express the panel bending test by considering 
the boundary condition explicitly.  
For simplicity, we let $(s_1, s_2) = (-\frac{L}{2}, \frac{L}{2})$ and
introduce
the boundary condition $\fB_{BT}$ which corresponds to the bending test
in Section \ref{sec:Exp}, 
\begin{gather*}
\begin{split}
\fB^{BT}_{W}:=&\Bigr\{ Z: \left[-\frac{L}{2}, \frac{L}{2}\right] \to \CC\ 
\Bigr| 
 \ Z \mbox{ is differentiable at } \pm \frac{L}{2}, \ \\
 &Z\left(\pm \frac{L}{2}\right) = X_0 + \frac{W}{2} \ii, \ 
\partial_s Z\left(\pm \frac{L}{2}\right) = \ii\Bigr\},
\end{split}
\end{gather*}
where $W(>0)$ means the width of the ending of the elastica $Z$.
The shape $Z_\mm$ of the ordinary
 elastica in the compression testing apparatus
is obtained as the minimizer 
$$
  Z_\mm^W| \min_{Z \in \fM_{[-\frac{L}{2}, \frac{L}{2}]}
\bigcap \fB^{BT}_{W} } \cE[Z].
$$
We obviously have the simple result;
\begin{lemma}
$Z_\mm^L([-\frac{L}{2}, \frac{L}{2}])
=\{X_0 + s\ii\ |\ s\in [-\frac{L}{2}, \frac{L}{2}]\}$.
\end{lemma}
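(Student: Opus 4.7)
The plan is to observe that when $W = L$, the feasible set $\fM_{[-L/2, L/2]} \cap \fB^{BT}_L$ collapses to a single curve, so the minimization is trivial. The key geometric fact is that the endpoints $Z(\pm L/2) = X_0 \pm (L/2)\ii$ are separated by Euclidean distance exactly $L$, while the arclength of any $Z$ in $\fM_{[-L/2, L/2]}$ is also exactly $L$ since $|\partial_s Z| = 1$.

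First, I would apply the integral triangle inequality to the tangent vector:
\begin{equation*}
L = |Z(L/2) - Z(-L/2)| = \left|\int_{-L/2}^{L/2} \partial_s Z(s)\, \dd s\right| \le \int_{-L/2}^{L/2} |\partial_s Z(s)|\, \dd s = L.
\end{equation*}
Because the two ends are equal, the inequality must be an equality, which forces $\partial_s Z(s)$ to have constant argument on $[-L/2, L/2]$ (the well-known equality case for the complex triangle inequality: all integrand values must be non-negative real multiples of a common unit direction). Combined with $|\partial_s Z| = 1$, this means $\partial_s Z$ is itself a constant unit vector.

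Second, I would fix that constant direction using the tangent boundary condition $\partial_s Z(\pm L/2) = \ii$, which gives $\partial_s Z(s) \equiv \ii$ on the whole interval. Integrating and applying $Z(L/2) = X_0 + (L/2)\ii$ yields $Z(s) = X_0 + s\ii$. Thus the feasible set $\fM_{[-L/2, L/2]} \cap \fB^{BT}_L$ contains exactly one element, and since this element has $k \equiv 0$ and therefore $\cE[Z] = 0$, it is trivially the minimizer $Z_\mm^L$.

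I do not expect a real obstacle here; the statement is essentially a degeneracy observation rather than a genuine variational fact. The only subtlety worth remarking on is the equality case of the triangle inequality for $\CC$-valued integrals, which reduces to the real case by factoring out a unit phase; this is a well-known fact and can be invoked without elaboration.
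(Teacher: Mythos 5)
Your argument is correct and complete. Note that the paper itself offers no proof of this lemma --- it is introduced with ``We obviously have the simple result'' and left at that --- so there is no authorial argument to compare against; your proof via the equality case of the triangle inequality for the $\CC$-valued integral $\int_{-L/2}^{L/2}\partial_s Z\,\dd s$, showing the feasible set $\fM_{[-L/2,L/2]}\cap\fB^{BT}_L$ is a singleton consisting of the straight segment, is exactly the right way to make the ``obvious'' rigorous. The only caveat is interpretive rather than mathematical: the boundary condition is typeset as $Z(\pm\frac{L}{2})=X_0+\frac{W}{2}\ii$, and you (reasonably, and consistently with the lemma's conclusion) read it as $Z(\pm\frac{L}{2})=X_0\pm\frac{W}{2}\ii$; under that reading your derivation that $\partial_s Z\equiv\ii$ and hence $Z(s)=X_0+s\ii$ is airtight, and the minimization is indeed vacuous since the constraint set has one element.
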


It is noted that 
for $W \in (0, L]$, there are two points $Z_\mm^W$, which are up-concave 
and down-concave.
We are concerned only with a continuous deformation from the 
straight elastica $Z_\mm^L$. We will choose the down-concave shapes.
We consider one parameter deformation in 
$\fM_{[-\frac{L}{2}, \frac{L}{2}]}$
for a deformation parameter $t\in I:=[0,1)$ with compression,
$$
 w(t) = (1-t)\cdot L.
$$
Let us consider a continuous orbit in $\fM_{[-\frac{L}{2}, \frac{L}{2}]}$,
$$
Z_\co^{w} : I \to \fM^{0}_{[-\frac{L}{2}, \frac{L}{2}]}, \quad
\left(t\mapsto Z_\co^{w}(t)=Z_\mm^{w(t)} 
\in \fM_{[-\frac{L}{2}, \frac{L}{2}]}\right).
$$
Since $Z_\co^{w}$ is continuous and $Z_\co^w(0)=Z_\mm^L$, 
$Z_\co^{w}$ is given by the following lemma.
\begin{lemma}\label{lm:4.3}
For $a \in [-4,0]$,
$Z_\co^w(t)(s) =  \frac{\ii}{\alpha}  (-\zeta(\alpha s + u_0)-e_1 s) + X_0$
where
$\displaystyle{u_0 =
\left(\omega''-\frac{\omega'}{2}\right)}$,
$\hat \omega = 2 \omega'  \in \RR$ and
$\displaystyle{\alpha=\frac{\hat \omega}{L}}$
such that $w(t)=(Z_\co^w(t)(L/2)-Z_\co^w(t)(-L/2))/\ii$.
\end{lemma}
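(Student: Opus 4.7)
The plan is to identify $Z_\co^w(t)$ explicitly by combining Theorem~\ref{thm:Z} with the classification of Proposition~\ref{prop:mod-elastica}, and then fitting the free parameters to the boundary data $\fB^{BT}_{w(t)}$ together with the initial condition $Z_\co^w(0)=Z_\mm^L$.

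First, Theorem~\ref{thm:Z} already furnishes the representation
\[
Z_\co^w(t)(s)=\frac{\ii}{\alpha}\bigl(-\zeta(\alpha s+u_0)-e_1 s\bigr)+Z_0,
\]
with the moduli point $\tau=\omega''/\omega'$ lying in the real locus $\Xi$ of~(\ref{eq:M_elas}). Proposition~\ref{prop:mod-elastica} stratifies $\Xi$ into the three Types Ia, Ib, II, each with its own prescribed $u_0$, tangential period $\hat\omega$, and index. At $t=0$ the orbit starts at the straight line $Z_\mm^L=X_0+s\ii$, for which $k\equiv 0$; substituting $k=\partial_s k=0$ together with the normalisation $a^2-b=16$ into (\ref{eq:C_1'}) yields $b=0$ and $a=-4$, placing the starting point at the endpoint $a=-4$ of the Type Ia range. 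Since the three types correspond to disjoint connected strata of $\Xi$ and $t\mapsto Z_\co^w(t)$ is continuous, the orbit remains in Type Ia throughout, and hence $a\in[-4,0]$. Proposition~\ref{prop:mod-elastica} then pins down $u_0=\omega''-\omega'/2$ and $\hat\omega=2\omega'\in\RR$ at once.

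Next, I would determine $\alpha$ and $Z_0$ from $\fB^{BT}_{w(t)}$. The condition $\partial_s Z(\pm L/2)=\ii$ requires the tangent to take the same value $\ii$ at both endpoints; combined with $\mathrm{index}(\partial_s Z)=0$ for Type Ia and the already-chosen down-concave branch deformed from the straight line, the interval $[-L/2,L/2]$ must carry exactly one tangential period $\hat\omega/\alpha$, so $\alpha L=\hat\omega$, i.e.\ $\alpha=\hat\omega/L$. The symmetry $s\leftrightarrow -s$ of $\fB^{BT}_{w(t)}$ together with $u_0=\omega''-\omega'/2$ makes $\wp(\alpha s+u_0)$ even in $s$; taking $Z_0=X_0\in\RR$ then centers the curve so that its two endpoints lie symmetrically on the vertical line through $X_0$, and the defining relation $w(t)=(Z_\co^w(t)(L/2)-Z_\co^w(t)(-L/2))/\ii$ selects the correct element of the one-parameter family parameterised by $a\in[-4,0]$.

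The principal difficulty is stratum control: one must exclude the possibility that the continuous orbit $t\mapsto Z_\co^w(t)$ escapes Type Ia, either through the rectangular boundary $a=0$ into Type Ib, or into a higher branch covering $n\ge 2$ tangential periods within $[-L/2,L/2]$. The first is prevented by the down-concave selection stated before the lemma together with the monotone behaviour of the width $w$ inside the Type Ia family; the second is ruled out by continuity from the straight configuration, which corresponds to $n=1$ at $t=0$. Once this control is in place, the remaining assertions of the lemma reduce to reading off explicit data from Theorem~\ref{thm:Z} and Proposition~\ref{prop:mod-elastica}.
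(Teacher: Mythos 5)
Your argument matches the paper's own (very terse) justification: the paper gives no formal proof of Lemma~\ref{lm:4.3}, only the preceding remark that $Z_\co^w$ is continuous with $Z_\co^w(0)=Z_\mm^L$, so that the representation of Theorem~\ref{thm:Z} and the Type Ia data of Proposition~\ref{prop:mod-elastica} can simply be read off --- exactly the route you take, with the omitted details ($b=0$ and $a=-4$ from the straight line, $\alpha L=\hat\omega$, fixing $Z_0=X_0$ by symmetry) filled in. One caution: your claim that the orbit remains in Type Ia \emph{throughout} $I$ overshoots what is true and what the paper asserts --- the text immediately after the lemma says it describes $Z_\co^w$ only for $t\in[0,t_R]$ with $a=0$ reached at $t_R\approx 0.877<1$, after which the deformation passes into Type Ib --- but since the lemma is stated conditionally for $a\in[-4,0]$, this overreach does not affect its validity.
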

The case $a=-4$ corresponds to $Z_\mm^L$ and $t=0$ whereas
the case $a=0$ corresponds to the part of the rectangular elastica
and $t=t_R:=0.8767723366$.
$Z_\co^{w}$ expresses the deformation in the panel bending test
and 
Lemma \ref{lm:4.3} shows the behavior of $Z_\co^{w}$ for $t\in [0, t_R]$.

For the elastica $Z_\co^w$, we denote its curvature by $k_\co^w$.
Since the curvature $|k_\co^w(t)(s=0)|$ is the largest curvature 
in the elastic curve $Z_\mm$,
we fix the point $s_0$ by $s_0 = 0$
using the symmetry for the boundary condition.

\bigskip

\subsection{$\Lambda_{\phi_0}$-elastica}

With a certain boundary condition, the minimizer $Z_\mm$ of the Euler-Bernoulli
functional 
$$
\cE^{\Lambda_{\phi_0}}[Z]:=
\cE[\rho^{(-\frac{L}{2}, \frac{L}{2})}_{(-\frac{L}{2}, 0)}Z]
+\cE[\rho^{(-\frac{L}{2}, \frac{L}{2})}_{( 0,\frac{L}{2})}Z]
$$
 in $\fM^{s_0,\phi_0}_{[s_1,s_2]}$ is the $\Lambda_{\phi_0}$-elastica.
We investigate it in this subsection.


Let us consider a disjoint orbit in $\fM^{0}_{[-\frac{L}{2}, \frac{L}{2}]}$
as a transition from elastica to $\Lambda$-elastica.
For a positive parameter $k_\cc$, which we call critical curvature,
we define the critical time $t_\cc^{k_\cc}$ by
$$
t_\cc^{k_\cc} := \sup_{t \in I}
\left\{|k_\co^w(t)(0)| < k_\cc 
     \right\}.
$$
We have the critical width
$$
W_\cc:=w(t_\cc^{k_\cc})=(1-t_\cc^{k_\cc})\cdot L.
$$
Then we can express the transition from  elastica
to $\Lambda_{\phi_0}$-elastica as
$$
Z^{k_\cc,\phi_0}_{\dor} : 
I_\cc \to \fM^{0}_{[-\frac{L}{2}, \frac{L}{2}]},
$$
where $I_\cc:=[t_\cc^{k_\cc}, 1]$ and the disjoint orbit,
$$
Z^{k_\cc,\phi_0}_{\dor}(t):=
\left\{ \begin{array}{ll}
Z_\co^w(t)
  &  \mbox{for} t < t_\cc^{k_\cc}\\
Z_\mm| \displaystyle{
\min_{Z \in \fM^{0,\phi_0}_{[s_1,s_2]}\bigcap \fB^{BT}_{w(t)} } 
\cE^{\Lambda_{\phi_0}}[Z]} , &  \mbox{for  } t = t_\cc^{k_\cc}
\end{array}\right..
$$

The following proposition is obvious:
\begin{proposition}\label{prop:L-elastica}
The minimizer $Z_\mm$ of 
$\cE^{\Lambda_{\phi_0}}[Z]$ in $\fM^{s_0,\phi_0}_{[s_1,s_2]}$
consists of the parts of elastica.
\end{proposition}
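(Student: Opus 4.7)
My plan is to reduce the problem to the variational characterization of elastica already established in Section~\ref{sec:elastica}, by exploiting the fact that the functional $\cE^{\Lambda_{\phi_0}}[Z]$ splits as a sum of two independent integrals over $(-\tfrac{L}{2},0)$ and $(0,\tfrac{L}{2})$ whose Euler--Lagrange equations can be probed separately.

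First I would unpack the definitions: a minimizer $Z_\mm \in \fM^{s_0,\phi_0}_{[s_1,s_2]}$ satisfies (i) the boundary data at $s_1,s_2$, (ii) the isometric analytic immersion condition on each of the open intervals $(s_1,s_0)$ and $(s_0,s_2)$, and (iii) the hinge condition $\phi_0 = \tfrac{1}{\ii}\log(\partial_s Z(s_0+0)/\partial_s Z(s_0-0))$. The energy $\cE^{\Lambda_{\phi_0}}[Z]$ decomposes into $\cE_- + \cE_+$ with $\cE_\pm := \cE[\rho^{(s_1,s_2)}_{I_\pm}Z]$ for $I_- = (s_1,s_0)$ and $I_+ = (s_0,s_2)$.

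Next I would consider infinitesimal isometric normal variations $Z \mapsto Z + \bn\,\varepsilon(s)$ with $\varepsilon$ smooth and compactly supported in a subinterval of $I_-$ bounded away from both $s_1$ and $s_0$. Because such $\varepsilon$ and all its derivatives vanish at $s_1$ and at $s_0$, the perturbed curve preserves the boundary data at $s_1$, the value $Z(s_0)$, and the one-sided tangent $\partial_s Z(s_0 - 0)$; hence it lies again in $\fM^{s_0,\phi_0}_{[s_1,s_2]}\cap \fB^{BT}_{w(t)}$ and does not affect $\cE_+$. The minimality of $Z_\mm$ then forces the first variation of $\cE_-$ to vanish for every such $\varepsilon$, and the computation carried out in the proof of Proposition for the elastica equation (with its own Lagrange multiplier $a_-$ enforcing isometry) applies verbatim on $I_-$. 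The same argument with variations supported in $I_+$ yields that $\rho^{(s_1,s_2)}_{I_+} Z_\mm$ solves the elastica equation with some possibly distinct multiplier $a_+$. By Theorem~\ref{thm:Z} each restriction is given by the $\zeta$-function formula, proving the claim.

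The only subtle point, and what I would flag as the main obstacle, is verifying that admissible variations truly exhaust the tangent space at $Z_\mm$ on each open subinterval; in particular, one must check that taking $\varepsilon$ to vanish to sufficiently high order at $s_0$ still leaves enough test functions to recover the full Euler--Lagrange equation pointwise. This follows from the density of compactly supported smooth functions in the appropriate test space, together with the analyticity of $k$ on each open piece, so the vanishing of $\ak + \tfrac{1}{2}k^3 + \partial_s^2 k$ against all such $\varepsilon$ forces the expression to be identically zero on $I_\pm$. No compatibility of $a_-$ and $a_+$ is required, which is consistent with the interpretation that the two halves are genuinely separate elastica segments glued only by the continuity of $Z$ and the hinge angle $\phi_0$.
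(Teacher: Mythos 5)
Your argument is correct. For comparison: the paper offers no proof of this proposition at all---it is introduced with the words ``the following proposition is obvious,'' and the subsequent remark only notes that the Weierstrass--Erdmann corner conditions of \cite{GF} could be applied instead. Your localization argument is precisely what makes the claim rigorous: since $\cE^{\Lambda_{\phi_0}}$ splits as $\cE_-+\cE_+$, a normal variation compactly supported in the interior of $(s_1,s_0)$ (resp.\ $(s_0,s_2)$) leaves the endpoint data, the continuity of $Z$ at $s_0$, and both one-sided tangents---hence the hinge angle $\phi_0$---unchanged, so admissibility is preserved and the first variation of the corresponding half must vanish; the constrained Euler--Lagrange computation of Section~\ref{sec:elastica} then gives (\ref{eq:SMKdV}) on each piece with its own multiplier $a_\pm$, and Theorem~\ref{thm:Z} identifies each piece as an arc of an elastica. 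Your flagged subtlety is handled adequately by the fundamental lemma of the calculus of variations together with analyticity of $k$ on each open piece, and your treatment of the isometry constraint via the multiplier is at the same level of rigor as the paper's own derivation of the elastica equation. The only thing your route deliberately forgoes, relative to the Weierstrass--Erdmann alternative, is any matching (corner) condition at $s_0$ relating the two arcs; but the proposition asserts nothing of that kind, so no gap results.
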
 

\begin{remark}
{\rm{
The $\Lambda_{\phi_0}$-elastica can be regarded as a curve of 
picewised elastica.
Thus we can apply the Weierstrass-Erdmann corner conditions to
this system directly \cite{GF}, though we employ another approach.
}}
\end{remark}

Following Proposition \ref{prop:L-elastica}, 
we numerically compute $\Lambda_{\phi_0}$-elastica.
For $\phi_0=\pi/4$, the numerical computations shows 
a disjoint orbit $Z^{k_\cc,\phi_0}_0(t)$ 
illustrated in Figure \ref{fig:Belastica_Gamma}.
We set $L=1$.

In the computation, we used the Maple 2019.
We assume that 
$\Lambda^{\phi_0}$-elastica consisting of type II elastica
 in 
Proposition \ref{prop:mod-elastica} 
is the minimizer of $\cE^{\Lambda_{\phi_0}}[Z]$.
In other words,
we searched the minimal point only in type II elastica
for $\Lambda^{\phi_0}$-elastica,
even though there are other local minimal points in the function space
because
the shape which satisfy the boundary condition and is given by type II elastica
obviously 
seems to have smaller curvature than the shapes consisting of other type 
elastica;
we do not argue the other possibilities in this paper.

We fix the parameter $a\in [4, \infty)$ in type II elastica.
From Proposition \ref{prop:mod-elastica}, we find $\alpha s_1$ and 
$\alpha s_2$ so that these points correspond to the minimal $X_\mm$, e.g., in
Figure \ref{fig:typeI} (f), which satisfies the boundary condition 
$\partial_s Z_\mm(s_i) = \ii$, $(i=1,2)$ using (\ref{eq:pZ}).
We numerically found $\alpha s_0$ for the transcendental equation,
$$
\log (\partial_s Z_\mm(s_0)) = \phi_0\ii
$$
using (\ref{eq:phi}).
It determines $\alpha$ because of
$\alpha (s_1-s_0)=L/2$ and then we have the width $W_{II}$
as a function of $a$.
Thus for a given width $W_\cc$,
using the bisection method, we found $a$ which reproduces $W_\cc$
up to a certain error.

The shape of $\Lambda_{\phi_0}$-elastica and the transition is given in 
Figure \ref{fig:Belastica_Gamma} (c)-(a).
We define the energy gap  by
\begin{equation}
\Delta \cE^{\Lambda_{\phi_0}}
:= \lim_{t\to t_\cc-0} \cE[Z^{k_\cc,\phi_0}_0(t)]
-\cE^{\Lambda_{\phi_0}}[Z^{k_\cc,\phi_0}_0(t_\cc^{k_\cc})].
\label{eq:DeltaE}
\end{equation}
In the case of Figure \ref{fig:Belastica_Gamma}, 
$\Delta \cE^{\Lambda_{\phi_0}}$ is positive,
as in Figure \ref{fig:Belastica_Gamma} (f).
It means that by the transition, total energy 
decreases; the $\Lambda$-elastica is  stabler than  
the ordinary elastica.

\begin{figure}[ht]
\begin{minipage}{0.38\hsize}
\begin{center}
\includegraphics[width=0.9\hsize]{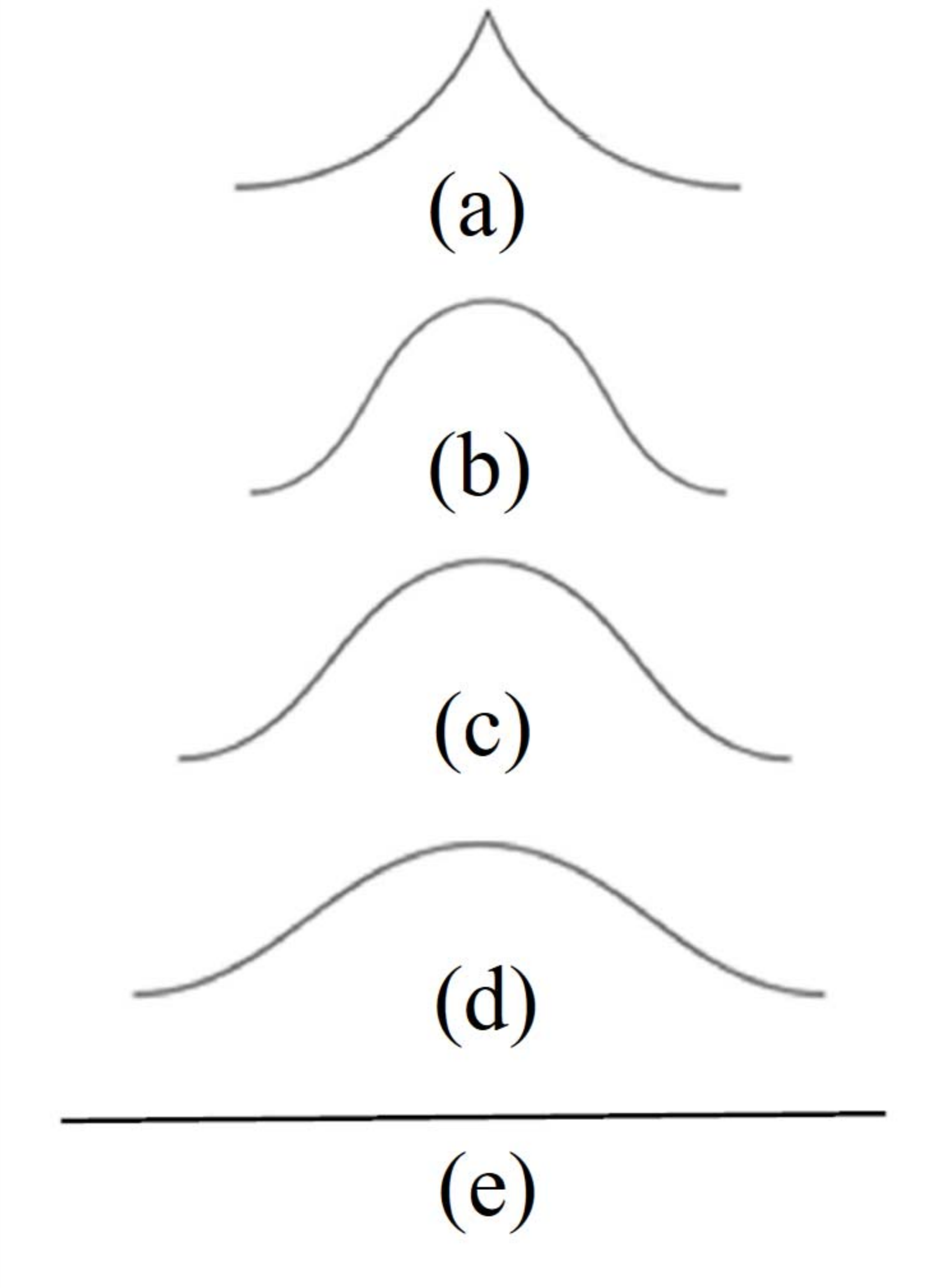}
\end{center}
\end{minipage}
\begin{minipage}{0.61\hsize}
\begin{center}
\includegraphics[width=0.95\hsize]{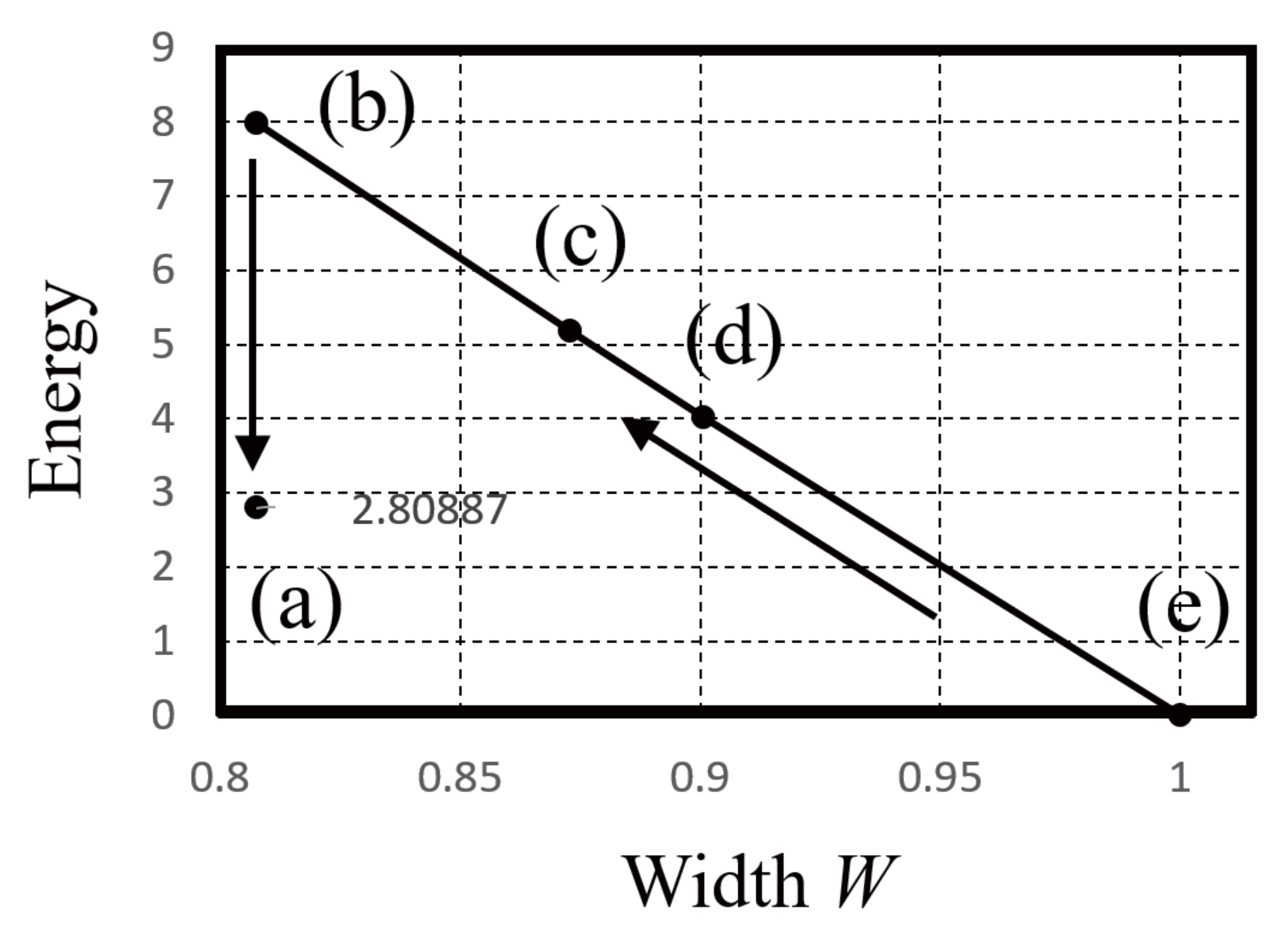}
\\
(f)
\end{center}
\end{minipage}
\caption{A transition: (a)-(e) shows 
the orbit from elastica to $\Lambda_{\phi_0}$ elastica;
from (e) to (a), whereas for the orbit, the total energy is illustrated in
(f). The width of (b) is the same as that of (a), which corresponds to
the critical width. In the computation, we let $L=1$ and $\kappa=1$.}
\label{fig:Belastica_Gamma}
\end{figure} 

Under this boundary condition, 
(\ref{eq:Z=zeta}) and Proposition \ref{prop:3.5} mean
that the relation between 
the width $W$ and the energy $\cE[Z_\mm^W]$ is given as a linear 
equation,
\begin{equation}
\cE[Z_\mm^W] = E_0 (L-W)
\label{eq:ELW}
\end{equation}
because both are written by the Weierstrass' zeta functions.

It is obvious to have the positivity of the energy gap
from the fact (\ref{eq:fMinfM}):
\begin{proposition}
$$
\Delta \cE^{\Lambda_{\phi_0}}:= \sup_{\phi_0 \in [0, \pi)} \Delta \cE^{\Lambda_{\phi_0}}
$$
is non-negative.
\end{proposition}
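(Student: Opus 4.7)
The strategy is to exhibit one specific value of $\phi_0$ for which the energy gap is manifestly non-negative; since the supremum dominates the value at that $\phi_0$, this suffices. The natural choice is $\phi_0=0$, the degenerate ``no-hinge'' case, because then the piecewise-elastica condition reduces to ordinary smoothness and the variational problem can be compared directly with the one defining $Z_\mm^{W_\cc}$.

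First I would verify the set inclusion
\begin{equation*}
\fM_{[-L/2,\,L/2]} \cap \fB^{BT}_{W_\cc} \ \subset\ \fM^{0,0}_{[-L/2,\,L/2]} \cap \fB^{BT}_{W_\cc}
\end{equation*}
by unpacking the definitions in Section~\ref{sec:Trans}. An ordinary elastica $Z\in \fM_{[-L/2,\,L/2]}$ is analytic on $(-L/2,L/2)$ and therefore restricts to isometric analytic immersions on $(-L/2,0)$ and $(0,L/2)$; its smoothness at $s=0$ forces $\partial_s Z(0+)=\partial_s Z(0-)$, so the hinge angle is $0$. Hence every ordinary elastica meeting the boundary condition is a legitimate competitor in the $\Lambda_0$-elastica minimization.

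Next, on this common subset the two energy functionals coincide,
\begin{equation*}
\cE^{\Lambda_0}[Z] = \cE\bigl[\rho^{(-L/2,L/2)}_{(-L/2,0)} Z\bigr] + \cE\bigl[\rho^{(-L/2,L/2)}_{(0,L/2)} Z\bigr] = \cE[Z],
\end{equation*}
by additivity of the defining integral. Minimizing over the larger domain then gives a value no larger than over the smaller one, so
\begin{equation*}
\cE^{\Lambda_0}\bigl[Z^{k_\cc,0}_{\dor}(t_\cc^{k_\cc})\bigr] \ \le\ \cE[Z_\mm^{W_\cc}].
\end{equation*}

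Finally, for $t<t_\cc^{k_\cc}$ we have $Z^{k_\cc,0}_{\dor}(t)=Z_\mm^{w(t)}$, whose energy is linear in $w(t)$ by (\ref{eq:ELW}) and hence continuous as $t\to t_\cc^{k_\cc}-0$, so the limit equals $\cE[Z_\mm^{W_\cc}]$. Combining the two displays yields $\Delta\cE^{\Lambda_0}\ge 0$, whence $\sup_{\phi_0\in[0,\pi)}\Delta\cE^{\Lambda_{\phi_0}}\ge \Delta\cE^{\Lambda_0}\ge 0$. The only conceptual point worth emphasizing is the set inclusion making the $\phi_0=0$ case a genuine degeneration of the piecewise setting to the smooth one; everything else is variational monotonicity on nested admissible sets, so no serious obstacle is expected.
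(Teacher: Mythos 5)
Your argument is, in substance, the paper's own: the paper justifies this proposition with the single remark that it is ``obvious from (\ref{eq:fMinfM})'', i.e.\ from the fact that the ordinary elasticas $\fM_{[s_1,s_2]}$ sit inside $\fM^{s_0}_{[s_1,s_2]}=\bigcup_{\phi_0}\fM^{s_0,\phi_0}_{[s_1,s_2]}$, so that for the degenerate hinge angle the $\Lambda$-minimization admits $Z_\mm^{W_\cc}$ as a competitor and its minimum can only decrease. You have simply made that one-liner explicit (inclusion of admissible sets, agreement of $\cE^{\Lambda_{\phi_0}}$ with $\cE$ on the common set, identification of the limit $\lim_{t\to t_\cc^{k_\cc}-0}\cE[Z^{k_\cc,\phi_0}_{\dor}(t)]=\cE[Z_\mm^{W_\cc}]$ via the linearity (\ref{eq:ELW})), and each of those steps is sound.

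The one point you must reconcile is \emph{which} value of $\phi_0$ is the degenerate one. Reading the formula $\phi_0=\frac{1}{\ii}\log\frac{\partial_s Z(s_0+0)}{\partial_s Z(s_0-0)}$ literally, smoothness at $s_0$ indeed gives $\phi_0=0$, and your witness $0\in[0,\pi)$ works. But the paper's own text immediately after the proposition asserts that ``the case $\phi_0=\pi$ corresponds to ordinary elastica'', and its figures, experiments and numerics all treat $\phi_0$ as the opening angle of the hinge (the supplement of the turning angle), under which $\phi_0=0$ is a complete fold-back rather than the smooth case; indeed Table~6 gives $\cE[Z_\mm^{W_1}]=4.03$ against $\cE^{\Lambda_{0}}=13.81$, i.e.\ $\Delta\cE^{\Lambda_{0}}<0$, which contradicts your claimed inequality $\Delta\cE^{\Lambda_0}\ge 0$ under that convention. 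So either you adopt the literal definition (your proof then stands, but the paper's $\phi_0=\pi$ remark and its tables are using a different angle), or you adopt the paper's working convention, in which case the degenerate value $\pi$ is excluded from the index set $[0,\pi)$ of the supremum and you would need an additional continuity-in-$\phi_0$ argument, or an enlargement of the index set to $[0,\pi]$, to conclude. State which convention you are using and, if it is the latter, close that gap.
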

However for given $\phi_0$ and $W_\cc$,
the positiveness of 
$\Delta \cE^{\Lambda_{\phi_0}}$ is not guaranteed but
there exists $\phi_0$ whose $\Delta \cE^{\Lambda_{\phi_0}}$ is non-negative
since the case $\phi_0=\pi$ corresponds to ordinary elastica.
It might be expected that $\phi_0$ should be determined as
a minimal point of the energy $\Delta \cE^{\Lambda_{\phi_0}}$
in the parameter space $\phi_0\in [0, \pi]$.

We computed the cases with several conditions of $W_\cc$ and 
$\phi_0 = 0, \pi/4, \pi/2$ numerically and
draw up lists of them as in Figures \ref{fig:Belastica_final0}
and \ref{fig:Belastica_final1}, and Table 2-6.
Figure \ref{fig:Belastica_final1} shows the table of the elastica and 
$\Lambda_{\phi_0}$-elastica.
The blank in Figures \ref{fig:Belastica_final1} and Table 2-6
 means that we cannot find the 
$\Lambda_{\phi_0}$-elastica;
more precisely we can find shape which satisfies the boundary conditions
at $s_1$, $s_0$ and $s_2$ but since it has the much higher energy, we 
do not employ it as $\Lambda_{\phi_0}$-elastica in this paper.
In this computation, we also used the algorithm as mentioned above.
Table 2 shows the computed width of each shape by the bisection method.
Table 3 shows the height $X_\cc$ and $X_\Lambda$ for every width $W$.
Table 4 shows the elastica parameter $a$ of each shape
and Table 5 shows the imaginary part $\tau_i$ of the moduli parameter $\tau$
of elliptic function, i.e.,
 $\tau_i$ of $\tau =1/2 +\tau_i \ii$ for the elastica
and $\tau=\tau_i\ii$ for the $\Lambda_{\phi_0}$-elastica.
Table 6 gives each energy $\cE[Z]$ and 
$\cE^{\Lambda_{\phi_0}}[Z]$.
We display the results in Figure \ref{fig:Belastica_final0}.

\begin{table}[htb]
Table 2: Width $W_\cc$ computed by means of the bisection method
  \begin{tabular}{|c|r|r|r|r|r|}
\hline
$\phi_0$ & $W_5$ & $W_4$ &  $W_3$ & $W_2$ &$W_1$ \\
\hline
$\pi/2$&  &   &  & 0.872 & 0.900\\
\hline
$\pi/4$&  &   &0.807  & 0.873 & 0.901\\
\hline
$0$& 0.648 & 0.742  &0.808  & 0.872 & 0.901\\
\hline
elastica& 0.645  &  0.742& 0,808& 0.873 & 0.901\\
\hline
  \end{tabular}
\end{table}

\begin{table}[htb]
Table 3: Height $X_\cc$ and $X_\Lambda$ 
  \begin{tabular}{|c|r|r|r|r|r|}
\hline
$\phi_0$ & $W_5$ & $W_4$ &  $W_3$ & $W_2$ &$W_1$ \\
\hline
$\pi/2$&  &   &  & 0.186 & 0.186\\
\hline
$\pi/4$&  &   &0.245  & 0.186 & 0.154\\
\hline
$0$& 0.313 & 0.261  &0.212  & 0.151 & 0.119\\
\hline
elastica& 0.334  &  0.297& 0,262& 0.218 & 0.194\\
\hline
  \end{tabular}
\end{table}

\begin{table}[htb]
Table 4:Elastica parameter $a$ in the computations
  \begin{tabular}{|c|r|r|r|r|r|}
\hline
$\phi_0$ & $W_5$ & $W_4$ &  $W_3$ & $W_2$ &$W_1$ \\
\hline
$\pi/2$&  &   &  & 45.0 & 10000.0\\
\hline
$\pi/4$&  &   &6  & 4.08 & 4.0148\\
\hline
$0$& 20.0 & 4.45  &4.06  & 4.0023 & 4.000167\\
\hline
elastica& -1.3  &  -2& -2.5& -3 & -3.215\\
\hline
  \end{tabular}
\end{table}

\begin{table}[htb]
Table 5:Imaginary part $\tau_i$ of the moduli parameter $\tau$
  \begin{tabular}{|c|r|r|r|r|r|}
\hline
$\phi_0$ & $W_5$ & $W_4$ &  $W_3$ & $W_2$ &$W_1$ \\
\hline
$\pi/2$&  &   &  &  0.3492 & 0.1586 \\
\hline
$\pi/4$&  &   & 0.6080 & 1.1749 & 1.4429 \\
\hline
$0$& 0.4272 & 0.9036 & 1.2205 & 1.7390 &2.1560 \\
\hline
elastica& 0.5826 & 0.6396 & 0.6914 & 0.7617 & 0.8902 \\
\hline
  \end{tabular}
\end{table}

\begin{table}[htb]
Table 6: Energy $\cE[Z]$ and 
$\cE^{\Lambda_{\phi_0}}[Z]$
  \begin{tabular}{|c|r|r|r|r|r|}
\hline
$\phi_0$ & $W_5$ & $W_4$ &  $W_3$ & $W_2$ &$W_1$ \\
\hline
$\pi/2$&  &   &  & 1.23 & 1.23\\
\hline
$\pi/4$&  &   &2.81  & 3.67 & 4.60\\
\hline
$0$& 4.94 & 5.64  &7.22  & 10.76 & 13.81\\
\hline
elastica& 15.45 &  10.96& 7.99& 5.18 & 4.03\\
\hline
  \end{tabular}
\end{table}

\begin{remark}
{\rm{
For given $\phi_0$ and $W_\cc$,
the positiveness of 
$\Delta \cE^{\Lambda_{\phi_0}}$ is not guaranteed  
as in Figure \ref{fig:Belastica_final0}.
Figure \ref{fig:Belastica_final0} shows that 
in many cases, $\Delta \cE^{\Lambda_{\phi_0}}$ is positive
whereas there exist the case
 in which $\Delta \cE^{\Lambda_{\phi_0}}$ is negative.

We assume that for given $k_\cc$ and $\phi_0$, 
 $Z^{k_\cc,\phi_0}_0(t_\cc^{k_\cc})$ consists of elastica of 
type II,
though we did not compare the other local minimum of the elastica which 
has the boundary condition. Then 
the transition from  elastica to $\Lambda$-elastica
is given by 
a map in the moduli space of the elastica as in Table 5.
 It is quite interesting from
the viewpoint of the study on the moduli of elastica.
}}
\end{remark}

\begin{figure}
\begin{center}
\includegraphics[width=0.7\hsize]{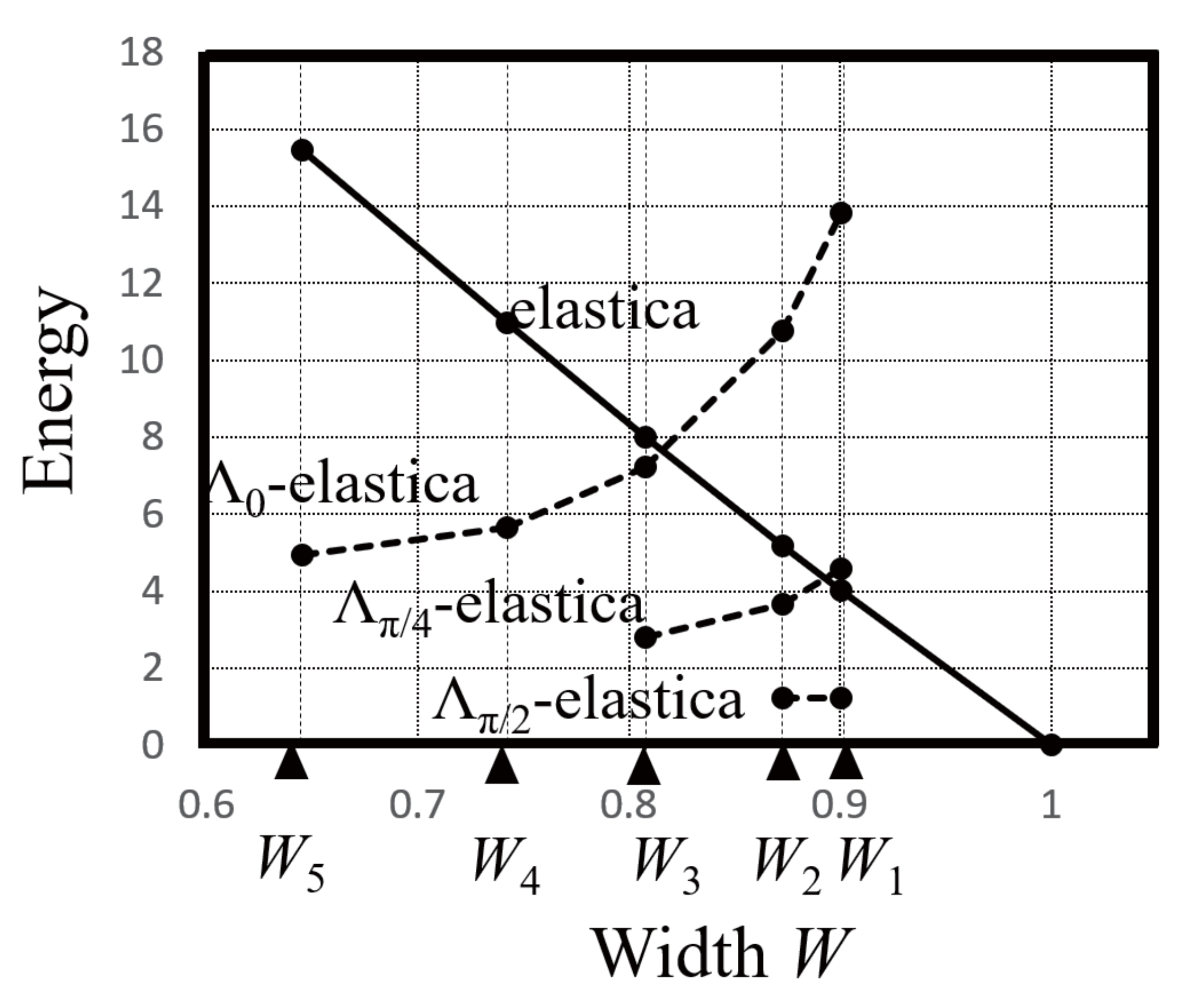}
\caption{$\Lambda_{\phi_0}$-elastica and energy in Table 2-6}
\label{fig:Belastica_final0}
\end{center}
\end{figure}

\begin{figure}
\begin{center}
\includegraphics[width=\hsize]{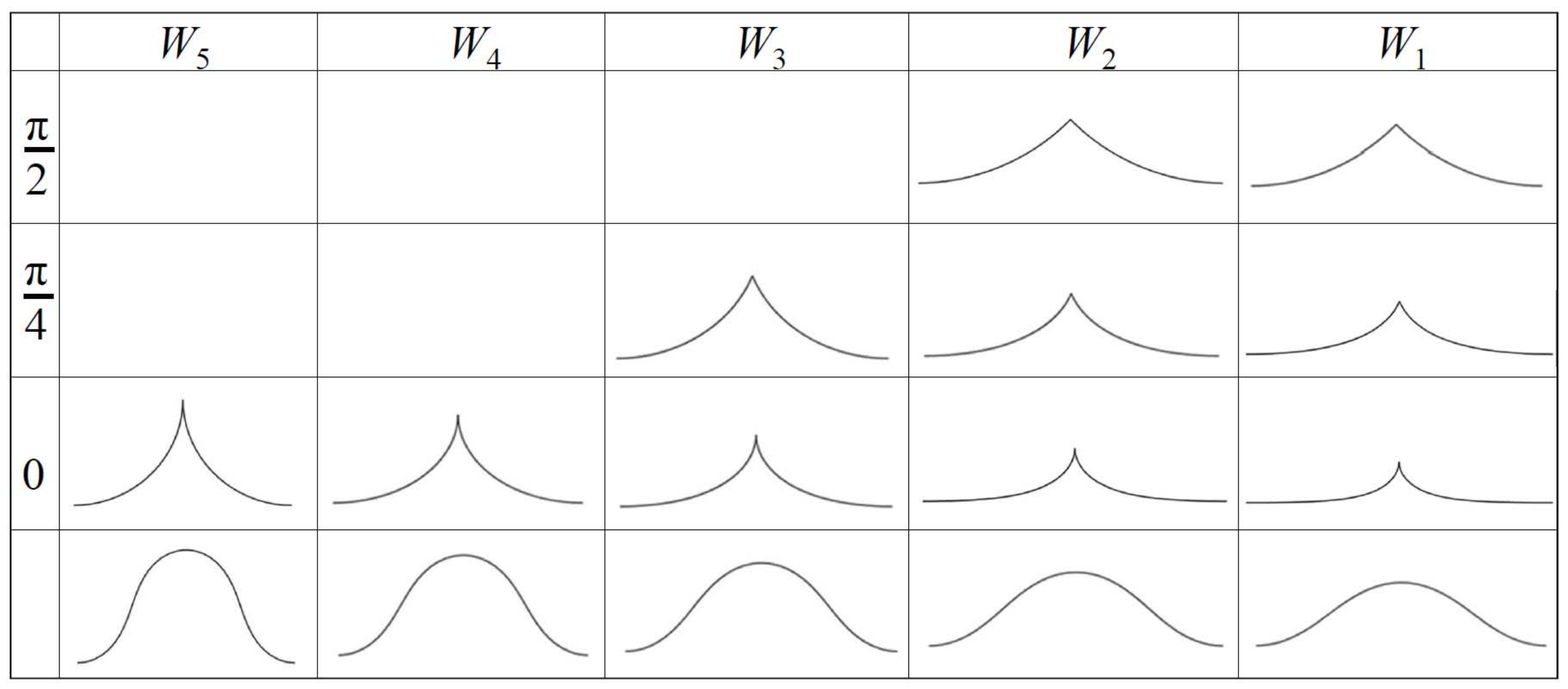}
\caption{Elastica and 
$\Lambda_{\phi_0}$-elastica in Table 2-6}
\label{fig:Belastica_final1}
\end{center}
\end{figure}

\section{Discussion}\label{sec:Dis}

In this paper, we investigated the 
$\Lambda$-elastica. 
We explicitly show the shape of $\Lambda_{\phi_0}$-elastica
in terms of Weierstrass elliptic $\zeta$-functions,
and numerically 
showed it in Figures
\ref{fig:Belastica_Gamma} and \ref{fig:Belastica_final1}.
By estimating their energy, we also considered
the 
transition from  elastica to $\Lambda$-elastica and stability 
from the viewpoint of energetic study.
The energy gap $\Delta \cE^{\Lambda_{\phi_0}}$ in (\ref{eq:DeltaE}) 
are numerically computed and illustrated in 
Figure \ref{fig:Belastica_final0} and Table 6.

When we compare the computational results and 
experimental results in Section \ref{sec:Exp},
the effective elastic constant is crucial, which is proportional to 
the thickness $\delta$, whereas
we used
the normalized elastic constant, $\kappa = 1$, in the computations.
The thickness $\delta$ of the elastic panel has the energy
$\delta \cdot \Delta \cE^{\Lambda_{\phi_0}}$ and,
for examples, the values in the graph of
Figure \ref{fig:Belastica_final0} should be multiplied by its 
thickness $\delta$.
In order to consider the effect of $\delta$, 
Table 1 reads the following table, Table 7.

\begin{table}[htb]
Table 7: The thickness vs $X_\cc$, $W_\cc$, $\phi_0$
and $X_\Lambda$
 in  Figure \ref{fig:experiment} 
  \begin{tabular}{|r|r|r|r|r|r|r|r|}
\hline
$\delta$ & $X_\cc$ & $\delta\cdot X_\cc$ &  $W_\cc$ & $\delta(L-W_\cc)$
&$\phi_0$ & $X_\Lambda$ & $\delta\cdot X_\Lambda$ \\
\hline
2.0 [mm]& 49[mm]   &98[mm${}^2$] & 234[mm]& 51[mm${}^2$]& 0.66$\pi$&
51[mm]   &102[mm${}^2$]\\
\hline
3.0 [mm]& 25[mm] & 74[mm${}^2$] & 242[mm]& 52[mm${}^2$] & 0.79$\pi$&
28[mm]   &85[mm${}^2$]\\
\hline
5.0 [mm]& 21[mm]& 104[mm${}^2$]  & 250[mm] & 49[mm${}^2$]& 0.86$\pi$&
23[mm]   &113[mm${}^2$]\\
\hline
  \end{tabular}
\end{table}

From (\ref{eq:ELW}), 
$E(\delta):=\delta\cdot(L-W_\cc)$'s correspond to the elastic energy 
at the critical state, which are similar values, though the width in
photographs in Figure \ref{fig:experiment} 
 is not easy to be determined and must have some errors.
On the other hand, from (\ref{eq:2.14}), it is expected  
that the force $\delta \cdot k_\cc$ is proportional to $\delta\cdot X_\cc $ 
(up to $\alpha$-dependence)
depend on the material properties though we made the notch
in each panel, Table 7 gives the natural results, in which
$\delta \cdot X_\cc$'s are similar values.
The height of $\Lambda$-elastica $X_\Lambda$ is
nearly equal to $X_\cc$ for every $\delta$ and thus
$\delta\cdot X_\Lambda$'s are similar values though we cannot
compare  $\delta\cdot X_\cc$ and $\delta\cdot X_\Lambda$
from mechanical viewpoints because $\alpha$'s in (\ref{eq:2.14})
of both elasitca and $\Lambda$-elastica are irrelevant.

In the experiment, it is expected 
that $\delta\cdot\Delta \cE^{\Lambda_{\phi_0}}$ corresponds 
to the energy of rupture.
After the panel lost the energy  of rupture, $\phi_0$ of 
$\Lambda_{\phi_0}$-elastica is determined
by energy conservation law.

Thus we note that Figure \ref{fig:Belastica_final0} and 
Figure \ref{fig:Belastica_final1} are consistent with 
the experimental results; the larger $W_\cc$ is, the larger $\phi_0$ is.
Our numerical computations also show that the lager $W_\cc$ is,
the lager $\phi_0$ is because the energy gap needs positive.

It means that we provide a novel
investigation of rupture phenomena for the beam bending test.
Further the shape of $\Lambda$-elastica is very interesting
since the shape of $\Lambda$-elastica appears in \cite{MWT} and in \cite{DMJ}.
As mentioned above, we described the transition from elastica to 
$\Lambda$-elastica in the beam bending experiment and $\Lambda$-elastica
mathematically.
We hope that our investigation should have some effects on these 
studies.


\vskip 1.0 cm

\section*{Acknowledgments}
The authors would like to express their sincere gratitude
to  the participants in the
``IMI workshop II: Mathematics of Screw Dislocation'',
September 1--2, 2016, in the
``IMI workshop I: Mathematics in Interface, Dislocation and 
Structure of Crystals'',
August 28--30, 2017, both held in the 
Institute of Mathematics for Industry (IMI),
to  the participants in the
``IMI workshop II: Advanced Mathematical Investigation for Dislocations'',
September 10--11, 2018,
and
``IMI workshop II: Advanced Mathematical Analysis for Dislocation, Interface and Structure in Crystals'',
September 9--10, 2019,
at Kyushu University.
The first author thanks Professor Ryuichi Tarumi for pointing out 
the Weierstrass-Erdmann corner conditions.
This study has been supported by Takahashi Industrial and 
Economic Research Foundation 2018-2019, 08-003-181.
\bigskip

\noindent
Shigeki Matsutani:\\
Faculty of Electrical, Information and Communication Engineering,\\
Kanazawa University\\
Kakuma Kanazawa, 920-1192, JAPAN\\
e-mail: s-matsutani@se.kanazawa-u.ac.jp\\

\noindent
Hiroshi Nishiguchi:\\
National Institute of Technology, Sasebo College,\\
1-1 Okishin-machi, Sasebo, Nagasaki, 857-1193, JAPAN\\

\noindent
Kenji Higashida:\\
National Institute of Technology, Sasebo College,\\
1-1 Okishin-machi, Sasebo, Nagasaki, 857-1193, JAPAN\\

\noindent
Akihiro Nakatani:\\
Department of Adaptive Machine Systems, \\
Graduate School of Engineering, 
Osaka University, 2-1 Yamadaoka, Suita, Osaka 565-0871, JAPAN\\

\noindent
Hiroyasu Hamada:\\
National Institute of Technology, Sasebo College,\\
1-1 Okishin-machi, Sasebo, Nagasaki, 857-1193, JAPAN\\

\end{document}